\numberwithin{equation}{section}
\renewcommand{\theequation}{\arabic{section}.\arabic{equation}}
\def\d{{\rm d}}
\newtheorem{theorem}{Theorem}
\begin{document}
\phantom{.}
\newpage
\hfill \hbox{MITP/17-097}

\title{Differential equations for loop integrals in Baikov representation}

\author{Jorrit Bosma}
\affiliation{ETH Z{\"u}rich, Wolfang-Pauli-Strasse 27, 8093 Z{\"u}rich, Switzerland}\email{jbosma@itp.phys.ethz.ch}
\author{Kasper J. Larsen}
\affiliation{School of Physics and Astronomy, University of Southampton, Highfield, Southampton, SO17 1BJ, United Kingdom}\email{Kasper.Larsen@soton.ac.uk}
\author{Yang Zhang}
\affiliation{ETH Z{\"u}rich, Wolfang-Pauli-Strasse 27, 8093 Z{\"u}rich, Switzerland}
\affiliation{PRISMA Cluster of Excellence, Johannes Gutenberg University, 55128 Mainz, Germany}\email{zhang@uni-mainz.de}

\begin{abstract}
We present a proof that differential equations for Feynman loop integrals
can always be derived in Baikov representation without involving dimension-shift identities.
We moreover show that in a large class of two- and three-loop diagrams it is possible to avoid
squared propagators in the intermediate steps of setting up the differential equations.
\end{abstract}

\pacs{11.10.--z, 11.15.--q, 11.15.Bt, 11.55.--m, 11.80.Cr, 11.80.Fv, 12.38.--t, 12.38.Bx, 12.39.Hg}
\maketitle

\section{Introduction}\label{sec:introduction}

The physics program of the Large Hadron Collider (LHC) demands precision
calculations of production cross sections in order to have accurate predictions of the
Standard Model signals and background. To match the experimental precision
and the parton distribution function uncertainties, this requires for most processes
computations at next-to-next-to leading order (NNLO) in fixed-order perturbation theory.

Out of the contributions to NNLO cross sections, the double-virtual one,
i.e.~the two-loop scattering amplitude, is the main theoretical bottleneck.
Two-loop amplitudes for most $2 \to 2$ processes of
phenomenological interest have now been computed%
~\cite{Bonciani:2008az,Gehrmann:2009vu,Bonciani:2009nb,Bonciani:2010mn,Gehrmann:2011ab,%
Gehrmann:2011aa,Gehrmann:2013vga,vonManteuffel:2013uoa,Gehrmann:2013cxs,Henn:2013woa,%
Bonciani:2013ywa,Henn:2014lfa,Gehrmann:2014bfa,Caola:2014lpa,Caola:2014iua,Papadopoulos:2014hla,%
Huber:2015bva,Gehrmann:2015ora,Caola:2015ila,vonManteuffel:2015msa,Bonciani:2016ypc,%
Bonciani:2016qxi,Melnikov:2017pgf,Mastrolia:2017pfy,Becchetti:2017abb}. Results for
$2 \to n$ processes with $n > 2$ are largely restricted to all-plus helicity
amplitudes~\cite{Badger:2013gxa,Badger:2015lda,Gehrmann:2015bfy,%
Dunbar:2016aux,Dunbar:2016cxp,Dunbar:2016gjb,Badger:2016ozq,Dunbar:2017nfy},
though impressive results on all helicity amplitudes for the planar five-gluon
amplitude have recently appeared \cite{Badger:2017jhb}. Expressions for the
planar five-gluon basis integrals were computed in refs.~\cite{Gehrmann:2015bfy,Papadopoulos:2015jft}.

Calculation of multi-loop amplitudes proceeds in two stages.
In the first step, the expression for the amplitude
obtained from the Feynman rules after tensor reduction
is recast as a linear combination of a basis of integrals, which leads
to a significant reduction in the number of contributing integrals.
(The existence of a finite basis was proved in ref.~\cite{Smirnov:2010hn}.)
This step is performed by use of integration-by-parts (IBP) reductions.
These arise from the vanishing integration of total derivatives
in dimensional regularization,
\begin{equation}
\int \prod_{i=1}^L \frac{\d^D \ell_i}{\mathrm{i} \pi^{D/2}}
\sum_{j=1}^L \frac{\partial}{\partial \ell_j^\mu}
\frac{v_j^\mu \hspace{0.5mm} P}{D_1^{\alpha_1} \cdots D_m^{\alpha_m}}
\hspace{1mm}=\hspace{1mm} 0 \,, \label{eq:IBP_schematic}
\end{equation}
where $P$ and the vectors $v_j^\mu$ are polynomial in the internal and
external momenta, the $D_k$ denote inverse propagators, and the
$\alpha_i$ are integers. In practice, the IBP identities
generate a large set of linear relations between loop integrals
which allows most of them to be reexpressed in terms of a set of
basis integrals. The step of solving the linear systems that arise from
eq.~(\ref{eq:IBP_schematic}) may be carried out by Gaussian elimination
in the form of the Laporta algorithm~\cite{Laporta:2000dc,Laporta:2001dd},
leading in general to relations that involve integrals with
squared propagators. There are several publicly available implementations
of automated IBP reduction: AIR~\cite{Anastasiou:2004vj},
FIRE~\cite{Smirnov:2008iw,Smirnov:2014hma},
Reduze~\cite{Studerus:2009ye,vonManteuffel:2012np},
LiteRed~\cite{Lee:2012cn}, Kira~\cite{Maierhoefer:2017hyi}, along with private implementations.
A formalism for deriving IBP reductions without squared
propagators was developed in refs.~\cite{Gluza:2010ws,Ita:2015tya}.
A systematic way of deriving IBP reductions on
generalized-unitarity cuts was given in ref.~\cite{Larsen:2015ped}.
A recent approach~\cite{vonManteuffel:2014ixa} uses sampling
of finite-field elements to construct the reduction coefficients.

In the second step of computing a multi-loop amplitude, one sets up
differential equations satisfied by the basis integrals%
~\cite{Kotikov:1990kg,Kotikov:1991pm,Bern:1993kr,Remiddi:1997ny,Gehrmann:1999as,Henn:2013pwa,%
Papadopoulos:2014lla,Ablinger:2015tua,Liu:2017jxz}. Letting $x_m$
denote an external kinematical invariant, $\epsilon = \frac{4-D}{2}$ the
dimensional regulator, and $\boldsymbol{\mathcal{I}}(\boldsymbol{x},\epsilon)
= (\mathcal{I}_1(\boldsymbol{x},\epsilon), \ldots, \mathcal{I}_N(\boldsymbol{x},\epsilon))$ the basis
of integrals, the result of differentiating any basis integral
with respect to $x_m$ can be expressed as a linear combination
of the basis integrals, once again by use of the IBP reductions. In this way,
one finds a first-order linear system of partial differential equations,
\begin{equation}
\frac{\partial}{\partial x_m} \boldsymbol{\mathcal{I}}(\boldsymbol{x},\epsilon)
= A_m (\boldsymbol{x}, \epsilon) \boldsymbol{\mathcal{I}}(\boldsymbol{x},\epsilon) \,.
\label{eq:diff_eqs_schematic}
\end{equation}
The boundary conditions can be fixed by using known results for a
(typically small) subset of the simpler basis integrals
and furthermore imposing regularity conditions, such as
requiring vanishing discontinuities in certain kinematic channels.
Having determined appropriate boundary conditions,
the differential equations (\ref{eq:diff_eqs_schematic}) can
be solved to produce expressions for the basis integrals.
In practice, this method has proven to be a powerful tool
for computing multi-loop integrals, enabling for example
the calculation of the basis integrals for the $2 \to 2$
two-loop amplitudes mentioned above.

An important insight for the latter calculations was
the observation \cite{Henn:2013pwa} that in many cases
(namely those where the basis integrals evaluate into
multiple polylogarithms), the basis can be chosen such that
the $\epsilon$-dependence factors out of the coefficient
matrix, $A_m (\boldsymbol{x}, \epsilon) = \epsilon \widehat{A}_m (\boldsymbol{x})$.
In such a basis, often referred to as a canonical basis,
integrating the differential equations in a Laurent expansion
around $\epsilon = 0$ becomes trivial.

In this paper we study differential equations of the type
in eq.~(\ref{eq:diff_eqs_schematic}) in the context of the IBP reduction formalisms
of refs.~\cite{Gluza:2010ws,Ita:2015tya,Larsen:2015ped,Bern:2017gdk}.
The main idea of the latter is to find appropriate choices of $v_j^\mu (\ell_i)$
in eq.~(\ref{eq:IBP_schematic}) for which the resulting IBP identities
do not involve squared propagators. These IBP identities have
the advantage of involving a more limited set of integrals than
generic IBP identities, and as a result produce smaller linear systems
to be solved.

The aim of this paper is two-fold. First, we prove
that the differential equations (\ref{eq:diff_eqs_schematic}) can be derived in Baikov
representation, used in several recent papers \cite{Larsen:2015ped,Frellesvig:2017aai,Zeng:2017ipr,Harley:2017qut},
without involving dimension-shift identities. This is advantageous
because the latter are computationally intensive to generate.
Second, we investigate to what extent the IBP reduction formalism
discussed in the previous paragraph is compatible
with the differential equations (\ref{eq:diff_eqs_schematic}).
We will show that, for a large class of diagrams,
it is possible to avoid having the derivatives on the left-hand side
of eq.~(\ref{eq:diff_eqs_schematic}) produce integrals
with squared propagators. The obtained differential equations
can subsequently be put in a canonical form
$A_m (\boldsymbol{x}, \epsilon) = \epsilon \widehat{A}_m (\boldsymbol{x})$
by applying the Lee-Moser algorithm \cite{Moser:1959,Lee:2014ioa,Lee:2017oca}
in single-scale cases, or multi-scale generalizations \cite{Meyer:2016slj}.
The algorithms find a change-of-basis matrix to a canonical basis.
They have been implemented in the publicly available codes \texttt{epsilon}~\cite{Prausa:2017ltv}
and \texttt{Fuchsia}~\cite{Gituliar:2017vzm}, and \textit{CANONICA}~\cite{Meyer:2017joq},
respectively.

This paper is organized as follows.
In Sec.~\ref{sec:Baikov_representation}
we set up notation and give the Baikov representation
of a generic loop integral. In Sec.~\ref{sec:diff_eqs_in_Baikov_representation}
we derive differential equations in Baikov representation and
show that dimension shifts can be avoided provided that
a certain ideal membership condition holds.
In Sec.~\ref{sec:proof_of_fundamental_ideal_membership} we
give a general proof of the latter ideal membership.
In Sec.~\ref{sec:diff_eqs_without_squared_propagators}
we show that if a stronger ideal membership holds,
then integrals with squared propagators can be avoided
in intermediate stages. In Sec.~\ref{sec:examples}
we give several examples where the stronger ideal
membership is found to hold, along with the explicit
differential equations. We also give a counterexample.
In Sec.~\ref{sec:conclusions} we give our conclusions.
In Appendix~\ref{sec:diff_eqs_via_dimension_shifts} we
derive differential equations in Baikov representation
via dimension-shift identities.

\section{Baikov representation}\label{sec:Baikov_representation}

In this section we set up notation and introduce the Baikov
representation of a generic Feynman loop integral.

We consider an $L$-loop integral with $k$ propagators and
$m-k$ irreducible scalar products (i.e., polynomials in the
loop momenta and external momenta which cannot be expressed
as a linear combination of the inverse propagators).
We work in dimensional regularization and normalize the
integral as follows,
\begin{equation}
I(N; \alpha_1, \ldots, \alpha_m; D) \equiv \int \prod_{j=1}^L \frac{\d^D \ell_j}{\mathrm{i} \pi^{D/2}}
\frac{N}{D_1^{\alpha_1} \cdots D_m^{\alpha_m}} \,. \label{eq:def_generic_Feynman_integral}
\end{equation}
Here $N$ denotes a polynomial in the external momenta $p_1, \ldots, p_E, p_{E+1}$
and the loop momenta $\ell_1, \ldots, \ell_L$. The propagators are labeled so that
\begin{align}
\alpha_i & \geq 1 \hspace{5mm} \mathrm{for} \hspace{5mm} i=1,\ldots, k \nonumber \\
\alpha_i & \leq 0 \hspace{5mm} \mathrm{for} \hspace{5mm} i=k+1,\ldots, m \,.
\label{eq:labeling_scheme_for_propagators_and_ISPs}
\end{align}
We remark that the notation (\ref{eq:def_generic_Feynman_integral}) does not give
a unique representation, since $D_{k+1}^{-\alpha_{k+1}} \cdots D_m^{-\alpha_m}$ can also be absorbed into $N$
to form a polynomial numerator,
\begin{equation}
I(N; \alpha; D) = I\Big( N \hspace{-1mm} \prod_{j=k+1}^m \hspace{-1mm} D_j^{-\alpha_j}; \hspace{0.5mm}
(\alpha_1, \ldots, \alpha_k, \boldsymbol{0}); D\Big) \,.
\end{equation}
Nevertheless, we find it more convenient to use this notation than to fix
the projective invariance.

Our aim is now to present the Baikov representation \cite{Baikov:1996rk} of
the integral (\ref{eq:def_generic_Feynman_integral}). To this end, we start
by taking the set of all independent external and loop momenta,
\begin{equation}
\{ v_1, \ldots, v_{E+L} \} = \{ p_1, \ldots, p_E, \ell_1, \ldots, \ell_L \} \,,
\end{equation}
and form their Gram matrix $S$,
\begin{equation}
S =
\left(
 \begin{array}{ccc|ccc}
  x_{1,1}  &  \hspace{-1mm} \cdots \hspace{-1mm}  &  \hspace{-1mm} x_{1,E}    &  x_{1,E+1}    &  \hspace{-1mm} \cdots \hspace{-1mm}  &  \hspace{-1mm} x_{1,E+L} \\
  \vdots   &  \hspace{-1mm} \ddots \hspace{-1mm}  &  \hspace{-1mm} \vdots     &  \vdots       &  \hspace{-1mm} \ddots \hspace{-1mm}  &  \hspace{-1mm} \vdots  \\
  x_{E,1}  &  \hspace{-1mm} \cdots \hspace{-1mm}  &  \hspace{-1mm} x_{E,E}    &  x_{E,E+1}    &  \hspace{-1mm} \cdots \hspace{-1mm}  &  \hspace{-1mm} x_{E,E+L} \\
\hline
 x_{E+1,1} &  \hspace{-1mm} \cdots \hspace{-1mm}  &  \hspace{-1mm} x_{E+1,E}  &  x_{E+1,E+1}  &  \hspace{-1mm} \cdots \hspace{-1mm}  &  \hspace{-1mm} x_{E+1,E+L} \\
 \vdots    &  \hspace{-1mm} \ddots \hspace{-1mm}  &  \hspace{-1mm} \vdots     &  \vdots       &  \hspace{-1mm} \ddots \hspace{-1mm}  &  \hspace{-1mm} \vdots  \\
 x_{E+L,1} &  \hspace{-1mm} \cdots \hspace{-1mm}  &  \hspace{-1mm} x_{E+L,E}  &  x_{E+L,E+1}  &  \hspace{-1mm} \cdots \hspace{-1mm}  &  \hspace{-1mm} x_{E+L,E+L}
\end{array} \right) \,,
\label{eq:extended_Gram_matrix}
\end{equation}
where,
\begin{equation}
x_{ij} = v_i \cdot v_j \,.
\end{equation}
It will be convenient for the derivations to follow
to relabel the entries of the upper-left $E \times E$ block
of $S$,
\begin{equation}
\lambda_{ij} = x_{ij} \hspace{5mm} \mathrm{for} \hspace{5mm} 1 \leq i, j \leq E \,,
\label{eq:definition_of_lambda}
\end{equation}
to emphasize that they are formed out of the external momenta only.
Furthermore, it is useful to define the Gram matrix $G$ of
the independent external momenta,
\begin{equation}
G =
\begin{pmatrix}
\lambda_{1,1} & \cdots & \lambda_{1,E} \\
\vdots & \ddots & \vdots \\
\lambda_{E,1} & \cdots & \lambda_{E,E}
\end{pmatrix} \,.
\label{eq:Gram_matrix_of_external_momenta}
\end{equation}
The entries of the remaining blocks of $S$ involve the loop momenta.
Because $S$ is a symmetric matrix, the entries of the upper-right $E \times L$ block
along with the upper-triangular entries of the lower-right $L \times L$ block,
\begin{equation}
x_{ij} \hspace{3mm} \mathrm{where} \hspace{2mm}
\left\{ \begin{array}{l}
1 \leq i \leq E \hspace{3mm} \mathrm{and} \hspace{3mm} E{+}1 \leq j \leq E{+}L \,, \\[2mm]
E{+}1 \leq i \leq j \leq E{+}L \,,
\end{array} \right.
\label{eq:independent_loop_containing_entries}
\end{equation}
are independent. As each set contributes respectively $LE$ and $\frac{L(L+1)}{2}$
elements, we find that the number of independent entries of $S$ that depend on the
loop momenta is $LE$ + $\frac{L(L+1)}{2}$. Since any inverse propagator $D_\alpha$ can be written
as a unique linear combination of the $x_{ij}$, we can thus conclude that the
combined number of propagators and irreducible scalar products in
eqs.~(\ref{eq:def_generic_Feynman_integral})--(\ref{eq:labeling_scheme_for_propagators_and_ISPs})
is given by,
\begin{equation}
m = LE + \frac{L(L+1)}{2} \,.
\end{equation}
In particular, we have for $\alpha = 1,\ldots, m$ that,
\begin{equation}
D_\alpha = \sum_{\beta=1}^m A_{\alpha \beta} x_\beta + \sum_{1\leq i \leq j \leq E} (B_\alpha)_{ij} \lambda_{ij} \hspace{3mm} \mathrm{with} \hspace{3mm}
A_{\alpha \beta} \in \mathbb{Z} \,,
\label{eq:relation_of_z_to_x}
\end{equation}
where $\beta = 1, \ldots, m$ labels the lexicographically-ordered
elements $(i,j)$ in eq.~(\ref{eq:independent_loop_containing_entries}),
and where the entries of $B_\alpha$ are integers.

The Baikov representation uses the inverse propagators and irreducible scalar
products as variables,
\begin{equation}
z_\alpha \equiv D_\alpha \hspace{4mm} \mathrm{where} \hspace{4mm} 1 \leq \alpha \leq m \,.
\label{eq:definition_of_z}
\end{equation}
The Jacobian associated with the change of variables from $(\ell_1^\mu, \ldots, \ell_L^\mu)$
in eq.~(\ref{eq:def_generic_Feynman_integral}) to $(z_1, \ldots, z_m)$
is an appropriate power of the determinant of $S$,
\begin{equation}
F \equiv \det S \,.
\label{eq:definition_of_Baikov_polynomial}
\end{equation}
The Baikov representation of the integral in eq.~(\ref{eq:def_generic_Feynman_integral})
takes the form
\footnote{
The consistency of the Baikov representation in eq.~(\ref{eq:Baikov_representation})
with that used in ref.~\cite{Larsen:2015ped} follows from the identity
$\det_{i,j=1,\ldots,L} \mu_{ij} = \frac{F}{U}$,
which in turn is a consequence of the Schur complement theorem in linear algebra.
Also note that ref.~\cite{Larsen:2015ped} uses the four-dimensional helicity
scheme and therefore requires that the external momenta span a vector space of
dimension at most four, i.e. $\mathrm{dim} \hspace{0.9mm} \mathrm{span} \{ p_1, \ldots, p_n\} \leq 4$.
As a result, the exponent of the Baikov polynomial there, with $L=2$, takes
the values $\frac{D-7}{2}$ and $\frac{D-6}{2}$ for $E \geq 4$ and $E=3$, respectively.
},
\begin{align}
I(N; \alpha; D) \hspace{0.5mm}&=\hspace{0.5mm} C_E^L(D) U^\frac{E-D+1}{2} \hspace{-1mm}
\int \frac{\d z_1 \cdots \d z_m}{z_1^{\alpha_1} \cdots z_m^{\alpha_m}} F^\frac{D-L-E-1}{2} N \,,
\label{eq:Baikov_representation}
\end{align}
where the first prefactor is given by,
\begin{equation}
C_E^L (D) \hspace{0.7mm} \equiv \hspace{0.7mm} \frac{\pi^{-L(L-1)/4 - L E/2}}
{\prod_{j=1}^L \Gamma \hspace{-0.5mm} \left(\frac{D-L-E+j}{2} \right)} \det A \,,
\label{eq:Baikov_prefactor}
\end{equation}
with $A$ defined in eq.~(\ref{eq:relation_of_z_to_x}),
and where $U$ denotes the determinant of the Gram matrix (\ref{eq:Gram_matrix_of_external_momenta})
of the independent external momenta,
\begin{equation}
U \equiv \det G \,.
\end{equation}
We remark that $U$ is equal to the square of the volume of the
parallelotope formed by the independent external momenta $\{ p_1, \ldots, p_E\}$
and therefore is nonvanishing for non-collinear external momenta.

\section{Differential equations in Baikov representation}\label{sec:diff_eqs_in_Baikov_representation}

In this section we show how differential equations of the form
in eq.~(\ref{eq:diff_eqs_schematic}) can be derived in Baikov representation.
We note that related work has appeared in refs.~\cite{Frellesvig:2017aai,Zeng:2017ipr,Harley:2017qut}.

Letting $\big(I_1, \ldots, I_M \big)$ denote a basis of integrals and
acting on the Baikov represention (\ref{eq:Baikov_representation}) by
a derivative with respect to an arbitrary external invariant $\chi$, we find
\footnote{We remark that the domain of integration in
eq.~(\ref{eq:Baikov_representation}) depends on $\chi$.
Hence one expects a term corresponding to the domain dependence
on $\chi$ to appear in the derivative. However, this term vanishes
since the integrand is zero on the boundary of the domain.
We refer to Sec.~2 of ref.~\cite{Bosma:2017ens} for a discussion
of this point.},
\begin{align}
\frac{\partial}{\partial \chi} I_j (N_j;\alpha;D)
&= \frac{E-D+1}{2U} \frac{\partial U}{\partial \chi} I_j (N_j;\alpha;D) \nonumber \\[1.5mm]
&\hspace{-10mm} +
\frac{D{-}L{-}E{-}1}{2} I_j \Big( \frac{1}{F} \frac{\partial F}{\partial \chi} N_j; \alpha; D \Big) \,.
\label{eq:diff_eqs_in_Baikov_rep_1}
\end{align}
We observe that the $\frac{1}{F}$ factor in the
second line effectively modifies the integration measure in
eq.~(\ref{eq:Baikov_representation}), shifting
the space-time dimension from $D$ to $D-2$.

This motivates us to ask whether there exist
polynomials $(a_1, \ldots, a_m, b)$ such that the following relation holds,
\begin{equation}
\frac{\partial F}{\partial \chi} = \sum_{i=1}^m a_i \frac{\partial F}{\partial z_i} + bF \,,
\label{eq:Baikov_poly_ideal_membership_explicit_1}
\end{equation}
since, as we will show shortly, this implies that
dimension shifts can be avoided in eq.~(\ref{eq:diff_eqs_in_Baikov_rep_1}).
In Appendix~\ref{sec:diff_eqs_via_dimension_shifts} we work out the
form of the differential equations (\ref{eq:diff_eqs_in_Baikov_rep_1})
when $\frac{\partial F}{\partial \chi}$ is left unchanged
by applying dimension-shift identities---i.e., relations between integrals
in $D$ and $D-2$ dimensions.

The relation~(\ref{eq:Baikov_poly_ideal_membership_explicit_1})
can equivalently be stated as the ideal membership property,
\begin{equation}
\frac{\partial F}{\partial \chi} \in \left\langle \frac{\partial F}{\partial z_1}, \ldots,
\frac{\partial F}{\partial z_m}, F \right\rangle \,,
\label{eq:Baikov_poly_ideal_membership_1}
\end{equation}
where the ideal is understood to be embedded in the ring
$\mathbb{C}[z_1, \ldots, z_m, \lambda_{ij}]$
of polynomials in the Baikov variables $(z_1, \ldots, z_m)$ and
external invariants $(\lambda_{i,j})_{i,j=1,\ldots,E}$ over $\mathbb{C}$.

In the following we will refer to eq.~(\ref{eq:Baikov_poly_ideal_membership_1})
as the \emph{fundamental} ideal membership.
In Sec.~\ref{sec:proof_of_fundamental_ideal_membership} we give a proof
of eq.~(\ref{eq:Baikov_poly_ideal_membership_1}). The proof is constructive and gives an
explicit construction of the cofactors $(a_1, \ldots, a_m, b)$
in eq.~(\ref{eq:Baikov_poly_ideal_membership_explicit_1}).

Let us now turn to the consequence of the ideal membership
for the differential equations.
Inserting eq.~(\ref{eq:Baikov_poly_ideal_membership_explicit_1}) into
eq.~(\ref{eq:diff_eqs_in_Baikov_rep_1}) and using
elementary integration by parts in $z_i$,
\begin{equation}
\frac{a_i N_j}{z_1^{\alpha_1} \cdots z_m^{\alpha_m}}
\frac{\partial}{\partial z_i} F^\frac{D-L-E-1}{2}
= - \frac{F^\frac{D-L-E-1}{2}}{z_1^{\alpha_1} \cdots z_m^{\alpha_m}}
z_i^{\alpha_i} \frac{\partial}{\partial z_i} \frac{a_i N_j}{z_i^{\alpha_i}} \,,
\end{equation}
we find,
\begin{align}
\frac{\partial}{\partial \chi} I_j (N_j;\alpha;D)
&= \frac{E-D+1}{2 U} \frac{\partial U}{\partial \chi} I_j (N_j;\alpha;D) \nonumber \\
&\hspace{20mm} + I_j \left( Q_j  ; \alpha; D \right) \,,
\label{eq:differential_equations}
\end{align}
where the insertion in the second term is given by,
\begin{equation}
Q_j = - \sum_{i=1}^m z_i^{\alpha_i} \frac{\partial}{\partial z_i} \hspace{-0.7mm}
\left(\frac{a_i N_j}{z_i^{\alpha_i}} \right) + \frac{D{-}L{-}E{-}1}{2} b N_j \,.
\label{eq:numerator_insertion}
\end{equation}
We observe that all integrals in the differential
equations (\ref{eq:differential_equations}) are $D$-dimensional integrals,
so that no dimension-shift identities are required.

By use of integration-by-parts reductions we can express
the integrals in the second line of eq.~(\ref{eq:differential_equations})
as linear combinations of the basis integrals $I_j (N_j; \alpha; D)$,
\begin{equation}
I_j (Q_j; \alpha; D) = \sum_{k=1}^M R_{jk} (\lambda, D) I_k(N_k; \alpha; D) \,.
\end{equation}
Thus, we arrive at the system of differential equations,
\begin{equation}
\frac{\partial}{\partial \chi} I_j (N_j;\alpha;D)
= \sum_{k=1}^M A_{jk} (\lambda, D) I_k (N_k;\alpha;D) \,,
\end{equation}
where the coefficient matrix is given by,
\begin{equation}
A_{jk} (\lambda, D) = \frac{E-D+1}{2 U} \frac{\partial U}{\partial \chi} \delta_{jk}
+ R_{jk} (\lambda, D) \,.
\end{equation}

\section{Proof of fundamental ideal membership}\label{sec:proof_of_fundamental_ideal_membership}

In this section we give a proof of the fundamental ideal membership
(\ref{eq:Baikov_poly_ideal_membership_1})
\footnote{We thank Roman N.~Lee for his idea of proving this ideal membership conjecture via
Laplace expansion of the determinant of symmetric matrices.}.
As explained in Sec.~\ref{sec:diff_eqs_in_Baikov_representation}, the latter implies
that differential equations of the form in eq.~(\ref{eq:diff_eqs_schematic})
can always be derived in Baikov representation without involving dimension-shift
identities---i.e., relations between integrals in $D$ and $D-2$ dimensions
of the form in eq.~(\ref{eq:D-dim_shift_identities}).

We emphasize that the proof is constructive and gives an
explicit construction of the cofactors $(a_1, \ldots, a_m, b)$
in eq.~(\ref{eq:Baikov_poly_ideal_membership_explicit_1}).

\begin{theorem}
Taking the Baikov polynomial in eq.~\textup{(\ref{eq:definition_of_Baikov_polynomial})} to
depend on the Baikov variables in eq.~\textup{(\ref{eq:definition_of_z})} and the $\lambda_{ij}$
in eq.~\textup{(\ref{eq:definition_of_lambda})} and letting $\chi$ denote any of the $\lambda_{ij}$,
there exist polynomials $(a_i, b)$ such that the following relation holds,
\begin{equation}
\frac{\partial F}{\partial \chi} = \sum_{i=1}^m a_i \frac{\partial F}{\partial z_i} + bF \,.
\end{equation}
\end{theorem}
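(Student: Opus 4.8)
The plan is to turn the statement into an identity among cofactors of the symmetric Gram matrix $S$ and to prove it from the adjugate relation $S\,\mathrm{adj}(S)=F\cdot\mathrm{Id}$ --- which is nothing but the Laplace (cofactor) expansion of $\det S$ together with the vanishing of the expansions ``along the wrong row''. First I would remove the change of variables $z=Ax+B\lambda$ relating the Baikov variables $z_\alpha=D_\alpha$ to the loop-containing entries $x_\beta$ of $S$: since $A$ is invertible (its determinant appears in $C_E^L(D)$), the chain rule $\partial F/\partial z_\alpha=\sum_\beta (A^{-1})_{\beta\alpha}\,\partial F/\partial x_\beta$ gives $\langle\partial F/\partial z_1,\dots,\partial F/\partial z_m\rangle=\langle\partial F/\partial x_\beta:\beta\in\mathcal L\rangle$, where $\mathcal L$ is the set of entries of $S$ carrying at least one index $>E$; and $\partial F/\partial\chi$ at fixed $z$ differs from $\partial F/\partial\chi$ at fixed $x$ by $-\sum_\alpha(B_\alpha)_{pq}\,\partial F/\partial z_\alpha$, once more an element of that ideal. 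Since $F=\det S$ and $\partial(\det S)/\partial x_{ij}$ is a nonzero multiple of $\mathrm{cof}_{ij}(S)$, it therefore suffices to prove, for $\chi=\lambda_{pq}$ with $p,q\le E$,
\[
\mathrm{cof}_{pq}(S)\;\in\;\big\langle\,\mathrm{cof}_{ij}(S):(i,j)\in\mathcal L\,\big\rangle+(F)\,.
\]

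The core of the argument reads off the $(p,q)$ entry of $S\,\mathrm{adj}(S)=F\cdot\mathrm{Id}$ for $p,q\le E$, using that $\mathrm{adj}(S)$ is symmetric because $S$ is:
\[
\sum_{k=1}^{E}\lambda_{pk}\,\mathrm{cof}_{kq}(S)\;=\;\delta_{pq}F-\sum_{k=E+1}^{E+L}x_{pk}\,\mathrm{cof}_{kq}(S)\,.
\]
The right-hand side lies in the target ideal, since $F$ does and every $\mathrm{cof}_{kq}(S)$ with $k>E$ is a cofactor indexed by $\mathcal L$. Hence the external Gram block $G=(\lambda_{pk})_{p,k\le E}$ annihilates, modulo that ideal, the $E\times E$ block of $\mathrm{adj}(S)$ built from the external cofactors $\mathrm{cof}_{kq}(S)$, $k,q\le E$. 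Because $U=\det G$ is nonzero for non-collinear external momenta, I can multiply through by $\mathrm{adj}(G)$ and solve for a single external cofactor,
\[
\mathrm{cof}_{pq}(S)=\frac{\mathrm{cof}_{pq}(G)}{U}\,F-\frac{1}{U}\sum_{k>E}\Big(\sum_{p'=1}^{E}\mathrm{cof}_{pp'}(G)\,x_{p'k}\Big)\mathrm{cof}_{kq}(S)\,,
\]
which is the desired membership and, pulled back through the first step, produces explicit cofactors $(a_i,b)$.

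The step that must be handled with care is this inversion of $G$: the resulting $(a_i,b)$ come out polynomial --- indeed affine --- in the $z_\alpha$, but they carry a power of $U=\det G$ in the denominator, so the natural home of the construction is $\mathbb C[z_1,\dots,z_m,\lambda_{ij}]$ localized at $U$ (equivalently, the cofactors are rational in the external invariants). This is both expected and harmless: the IBP reduction that follows is performed over the field of rational functions of the invariants, and a factor $1/U$ already appears in the first term of eq.~(\ref{eq:differential_equations}). The remainder is bookkeeping --- fixing the signs in the cofactor expansions, treating $p=q$ and $p\ne q$ uniformly, and proving cleanly the elementary ``fixed-$z$ versus fixed-$x$'' lemma used in the first step.
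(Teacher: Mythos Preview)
Your proposal is correct and follows essentially the same route as the paper: both read off the Laplace/adjugate identity $S\,\mathrm{adj}(S)=F\cdot\mathrm{Id}$ on the external rows to express the external cofactors of $S$ as $G$-linear combinations of loop cofactors and $F$, then invert $G$ (the paper packages the symmetric-derivative factors into a modified $G_i$ with one column doubled, but this is cosmetic). Your observation that the resulting $(a_i,b)$ naturally live in $\mathbb{C}[z]$ localized at $U$ rather than in $\mathbb{C}[z,\lambda]$ is well taken and in fact more explicit than the paper's own treatment.
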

\begin{proof}
Given a generic matrix $R = (r_{ij})_{i,j=1,\ldots, n}$ whose entries
are all independent, the Laplace expansion of the determinant
along the $i$th row can be expressed in the form,
\begin{equation}
\left[ \sum_{k=1}^n r_{jk} \frac{\partial (\det R)}{\partial r_{ik}} \right] - \delta_{ij} \det R = 0 \,,
\hspace{4mm} 1 \leq i,j \leq n \,.
\label{eq:Laplace_expansion_of_generic_matrix}
\end{equation}
The identities with $i\neq j$ arise by replacing the $i$th row of $R$
by the $j$th row, $r_{ik} \to r_{jk}$, which clearly produces a
matrix with a vanishing determinant.

For a symmetric matrix $S = (s_{ij})_{i,j=1,\ldots, n}$ the entries satisfy
$s_{ij}=s_{ji}$ and thus are not independent, but rather linearly constrained.
In this case, the Laplace expansion produces the identities,
\begin{equation}
\left[\sum_{k=1}^n (1{+}\delta_{ik}) s_{jk} \frac{\partial (\det S)}{\partial s_{ik}}\right] - 2\delta_{ij} \det S = 0 \,,
\hspace{2mm} 1 \leq i,j \leq n \,.
\label{eq:Laplace_expansion_of_symmetric_matrix}
\end{equation}
In taking the derivatives we must bear in mind that the entries
are not independent. We therefore imagine making the replacement
$s_{ji} \to s_{ij}$ with $i \leq j$ in $S$ prior to taking
derivatives, and in turn interpret $\frac{\partial (\det S)}{\partial s_{ik}}$
with $i>k$ as $\frac{\partial (\det S)}{\partial s_{ki}}$.

For the Gram matrix $S$ in eq.~(\ref{eq:extended_Gram_matrix}),
the identity (\ref{eq:Laplace_expansion_of_symmetric_matrix}) implies
for $1 \leq i,j \leq E$ that,
\begin{equation}
\sum_{k=1}^E (1{+}\delta_{ik}) \lambda_{jk}
\left(\hspace{-0.5mm}\frac{\partial F}{\partial \lambda_{ik}}\hspace{-0.5mm}\right)_{\hspace{-1mm} x}  = 2\delta_{ij} F - \sum_{q=E+1}^{E+L} (1{+}\delta_{iq}) x_{jq}
\frac{\partial F}{\partial x_{iq}} \,.
\label{eq:external_derivatives_of_F}
\end{equation}
The notation $\left(\hspace{-0.5mm}\frac{\partial F}{\partial \lambda_{ik}}\hspace{-0.5mm}\right)_{\hspace{-1mm} x}$
is used to emphasize that $F$ is here treated as a function of $x$ and $\lambda$,
and the $x$-variables are treated as constants when evaluating the derivative.
On the other hand, we may use the relations in
eqs.~(\ref{eq:relation_of_z_to_x})--(\ref{eq:definition_of_z})
to express $F$ as a function of $z$ and $\lambda$. The corresponding derivative
obtained while treating the $z$-variables as constants is related to
the former derivative through the chain rule,
\begin{equation}
\left(\hspace{-0.5mm}\frac{\partial F}{\partial \lambda_{ik}}\hspace{-0.5mm}\right)_{\hspace{-1mm} x}
= \left(\hspace{-0.5mm}\frac{\partial F}{\partial \lambda_{ik}}\hspace{-0.5mm}\right)_{\hspace{-1mm} z}
+ \sum_{\alpha=1}^m \frac{\partial z_\alpha}{\partial \lambda_{ik}}
\frac{\partial F}{\partial z_\alpha} \hspace{3mm} \mathrm{for} \hspace{3mm}
1 \leq i, k \leq E \,.
\label{eq:relation_between_F_derivatives_with_constant_x_and_z}
\end{equation}
Now, for a fixed $1 \leq i \leq E$, consider the left-hand side of
eq.~(\ref{eq:external_derivatives_of_F}) as the product of a matrix
$(G_i)_{jk}$ and a vector
$\left(\hspace{-0.5mm}\frac{\partial F}{\partial \lambda_{ik}}\hspace{-0.5mm}\right)_{\hspace{-1mm} x}$.
That is, by defining for a fixed $1 \leq i \leq E$ the matrix,
\begin{equation}
(G_i)_{jk} = (1 + \delta_{ik}) \lambda_{jk} \hspace{3mm} \mathrm{for} \hspace{3mm}
1 \leq j, k \leq E \,,
\label{eq:modified_Gram_matrix_of_external_moment}
\end{equation}
we can recast eq.~(\ref{eq:external_derivatives_of_F}) in the form,
\begin{equation}
\sum_{k=1}^E (G_i)_{jk}
\left(\hspace{-0.5mm}\frac{\partial F}{\partial \lambda_{ik}}\hspace{-0.5mm}\right)_{\hspace{-1mm} x}  = 2\delta_{ij} F - \sum_{q=E+1}^{E+L} (1{+}\delta_{iq}) x_{jq}
\frac{\partial F}{\partial x_{iq}} \,.
\label{eq:external_derivatives_of_F_recast}
\end{equation}
From eq.~(\ref{eq:modified_Gram_matrix_of_external_moment}) we observe
that $G_i$ is simply the Gram matrix $G$ in eq.~(\ref{eq:Gram_matrix_of_external_momenta})
with the $i$th column multiplied by a factor of 2. This implies that
$G_i$ is invertible, since $\det G_i = 2\det G = 2 U \neq 0$
for non-collinear external momenta.

Multiplying eq.~(\ref{eq:external_derivatives_of_F_recast}) by $G_i^{-1}$ from the left
and using eq.~(\ref{eq:relation_between_F_derivatives_with_constant_x_and_z})
and the chain rule,
\begin{equation}
\frac{\partial F}{\partial x_{iq}} = \sum_{\alpha=1}^m \frac{\partial z_\alpha}{\partial x_{iq}}
\frac{\partial F}{\partial z_\alpha} \hspace{4mm} \mathrm{for} \hspace{2mm}
\left\{ \hspace{-0.7mm}\begin{array}{l}
1 \leq i \leq E \,, \\[1.5mm]
E{+}1 \leq q \leq E{+}L \,,
\end{array} \right.
\end{equation}
we find that,
\begin{equation}
\left(\hspace{-0.5mm}\frac{\partial F}{\partial \lambda_{ik}}\hspace{-0.5mm}\right)_{\hspace{-1mm} z}
= \left[ \sum_{\alpha=1}^m a_{ik, \alpha} \frac{\partial F}{\partial z_\alpha} \right] + b_{ik} F \,,
\end{equation}
where the cofactors are given by,
\begin{align}
a_{ik, \alpha} &= -\frac{\partial z_\alpha}{\partial \lambda_{ik}}
- \sum_{j=1}^E \sum_{q=E+1}^{E+L} \hspace{-1mm} (1{+}\delta_{iq})
\frac{\partial z_\alpha}{\partial x_{iq}} \big(G_i^{-1}\big)_{kj} x_{jq} \,, \\
b_{ik}         &= 2 \big(G_i^{-1}\big)_{ki} \,.
\end{align}
From the relations in eqs.~(\ref{eq:relation_of_z_to_x})--(\ref{eq:definition_of_z})
it follows that the derivatives $\frac{\partial z_\alpha}{\partial \lambda_{ik}}$
and $\frac{\partial z_\alpha}{\partial x_{iq}}$ are integers. Furthermore, we may
use the relations to express the $x$-variables as a linear combination of the
$z$-variables.
This proves the theorem and moreover shows that the cofactors
$(a_1, \ldots, a_m, b)$ in eq.~(\ref{eq:Baikov_poly_ideal_membership_explicit_1})
can be taken to be at most linear polynomials in the Baikov variables
$z_\alpha$.

In the above proof, we assumed that all the $\lambda_{ij}$ are independent,
as is the case for generic kinematics. For massless kinematics, the
chain rules between independent Mandelstam invariants and $\lambda_{ij}$ imply that the
conclusions about the fundamental ideal membership and the degrees of the
cofactors continue to hold.
\end{proof}

\section{Differential equations without squared propagators}\label{sec:diff_eqs_without_squared_propagators}

In this section we investigate to what extent the IBP reduction formalisms
of refs.~\cite{Gluza:2010ws,Ita:2015tya,Larsen:2015ped,Bern:2017gdk}
are compatible with differential equations of the type in
eq.~(\ref{eq:diff_eqs_schematic}).
As we will see, for a large class of multi-loop diagrams,
it is possible to avoid having the derivatives on the left-hand side
of eq.~(\ref{eq:diff_eqs_schematic}) produce integrals
with squared propagators. We note that related work has appeared in refs.~\cite{Zeng:2017ipr,Bern:2017gdk}.

Returning to the derivation in Sec.~\ref{sec:diff_eqs_in_Baikov_representation},
from eq.~(\ref{eq:numerator_insertion}) we observe
that terms with positive $\alpha_i$ will produce squared
propagators for a generic polynomial $a_i$.

However, provided it
is possible to choose the polynomials $a_i$ such that,
\begin{equation}
a_i = z_i b_i \hspace{5mm} \mathrm{for} \hspace{5mm} i=1,\ldots,k \,,
\label{eq:cofactor_proportionality}
\end{equation}
where $b_i$ denote polynomials, the insertion (\ref{eq:numerator_insertion})
takes the following form,
\begin{align}
Q_j = & - \sum_{i=1}^k \left[ z_i \frac{\partial (b_i N_j)}{\partial z_i} + (1{-}\alpha_i) b_i N_j \right] \nonumber\\
&-\sum_{i=k+1}^m \left[ \frac{\partial (a_i N_j)}{\partial z_i}
- \alpha_i \frac{a_i N_j}{z_i} \right] + \frac{D{-}L{-}E{-}1}{2} b N_j \,.
\label{eq:numerator_of_differentiated_integrand}
\end{align}
We observe that the decomposition of $\frac{\partial F}{\partial \chi}$ in
eq.~(\ref{eq:Baikov_poly_ideal_membership_explicit_1}) with this choice of
cofactors $a_i$ leads to a set of differential equations
eq.~(\ref{eq:differential_equations}) without squared propagators.

In the following we will show that the \emph{enhanced} ideal membership condition,
equivalent to eq.~(\ref{eq:Baikov_poly_ideal_membership_explicit_1})
combined with eq.~(\ref{eq:cofactor_proportionality}),
%\begin{equation}
%\frac{\partial F}{\partial \chi} \in \left\langle z_1 \frac{\partial F}{\partial z_1}, \ldots,
%z_m \frac{\partial F}{\partial z_m}, F \right\rangle \,,
%\label{eq:Baikov_poly_ideal_membership_2}
%\end{equation}
\begin{equation}
\frac{\partial F}{\partial \chi} \in \left\langle z_1 \frac{\partial F}{\partial z_1}, \ldots,
z_k \frac{\partial F}{\partial z_k}, \frac{\partial F}{\partial z_{k+1}}, \ldots,
\frac{\partial F}{\partial z_m},F \right\rangle \,,
\label{eq:Baikov_poly_ideal_membership_2}
\end{equation}
turns out to hold for a large class of multi-loop diagrams.

We note that ideal membership can be determined by computing
a Gr{\"o}bner basis $\mathcal{G}$ of the ideal
on the right-hand side of eq.~(\ref{eq:Baikov_poly_ideal_membership_2})
and computing the remainder $r$ of $\frac{\partial F}{\partial \chi}$
after polynomial division with respect to $\mathcal{G}$.
Namely, eq.~(\ref{eq:Baikov_poly_ideal_membership_2}) holds
if and only if $r=0$.

Alternatively, one can solve explicitly for the cofactors
$(b_1, \ldots, b_k, a_{k+1}, \ldots, a_m, b)$
by starting with Ans{\"a}tze which are linear in $(z_1, \ldots, z_m)$
and iteratively allowing for solutions of higher degree. This turns
out to be an efficient approach in practice, as cofactors are
found to be of low degrees and hence lead to linear systems of manageable sizes.

\section{Examples}\label{sec:examples}

In this section we work out several examples of the formalism
developed in Secs.~\ref{sec:diff_eqs_in_Baikov_representation}--\ref{sec:diff_eqs_without_squared_propagators}
and present explicit results for the cofactors associated with
the enhanced ideal membership (\ref{eq:Baikov_poly_ideal_membership_2})
along with the resulting differential equations.

We display a selection of multi-loop diagrams
for which the enhanced ideal membership
(\ref{eq:Baikov_poly_ideal_membership_2}) has been verified to hold,
and also show a counterexample.

\subsection{Massless planar double-box}

As a first simple example we consider the fully massless planar
double-box diagram shown in fig.~\ref{fig:massless_planar_DB_z_variables}.

\begin{figure}[!h]
\begin{center}
\includegraphics[angle=0, width=0.28\textwidth]{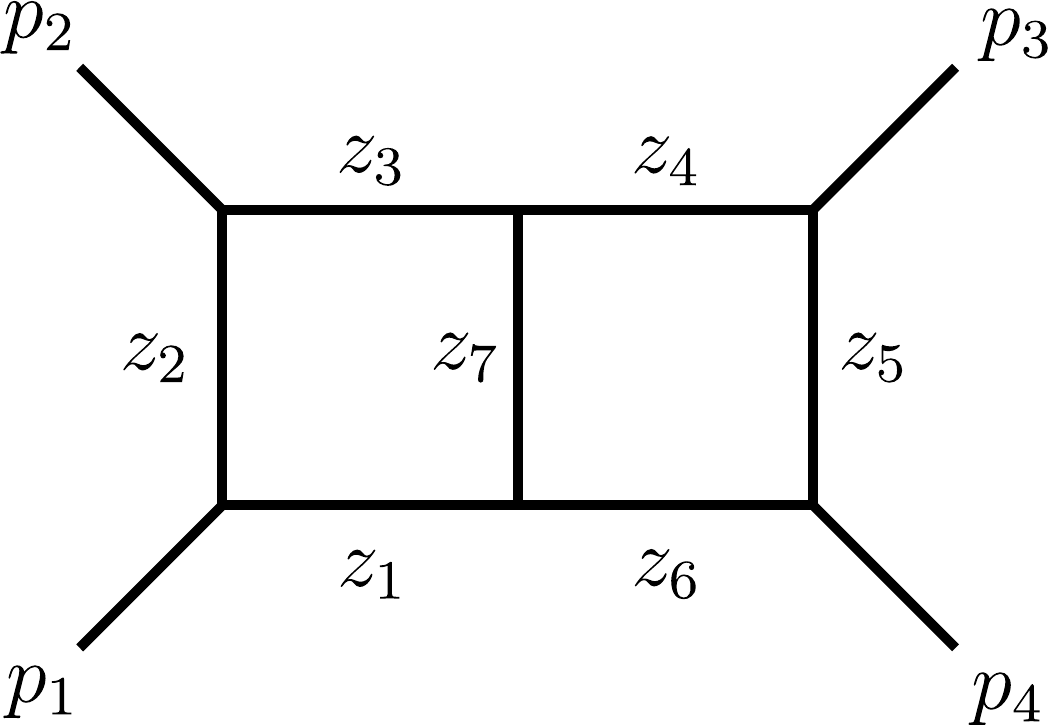}
{\vskip -0mm}
\caption{The fully massless planar double-box diagram.
All external momenta are taken to be outgoing.}
\label{fig:massless_planar_DB_z_variables}
\end{center}
\end{figure}

\noindent In agreement with eq.~(\ref{eq:definition_of_z}),
we define the $z$-variables as follows, setting
$P_{12} \equiv p_1 + p_2$,
\begin{equation}
\begin{alignedat}{3}
z_1&=\ell_1^2\,,           \hspace{3mm}  && z_2 = (\ell_1 - p_1)^2\,,  \hspace{3mm} && z_3 = (\ell_1 - P_{12})^2\,, \\
z_4&=(\ell_2+P_{12})^2\,,  \hspace{3mm}  && z_5 = (\ell_2 - p_4)^2\,,  \hspace{3mm} && z_6 = \ell_2^2\,, \\
z_7&=(\ell_1+\ell_2)^2\,,  \hspace{3mm}  && z_8 = (\ell_1 + p_4)^2\,,  \hspace{3mm} && z_9 = (\ell_2 + p_1)^2\,.
\end{alignedat}
\end{equation}
In terms of these variables and the Mandelstam invariants $s=(p_1 + p_2)^2$ and $t=(p_2 + p_3)^2$,
the Gram determinant in eq.~(\ref{eq:definition_of_Baikov_polynomial}) takes the following form,
\begin{equation}
F=\begin{vmatrix}
 0 & \frac{s}{2} & -\frac{s+t}{2} & \frac{z_1-z_2}{2} & \frac{z_9-z_6}{2} \\[2mm]
 \frac{s}{2} & 0 & \frac{t}{2} & \frac{z_2-z_3+s}{2} & \frac{z_4-z_9-s}{2} \\[2mm]
 -\frac{s+t}{2} & \frac{t}{2} & 0 & \frac{z_3-z_8-s}{2} & \frac{z_5 - z_4 + s}{2} \\[2mm]
 \frac{z_1-z_2}{2} & \frac{z_2 - z_3 +s}{2} & \frac{z_3-z_8-s}{2} & z_1 & \frac{z_7 - z_1 - z_6}{2} \\[2mm]
 \frac{z_9-z_6}{2} & \frac{z_4 - z_9 -s}{2} & \frac{z_5 - z_4 +s}{2} & \frac{z_7 - z_1 - z_6}{2} & z_6
\end{vmatrix} \,.
\label{eq:extended_Gram_det_planar_DB}
\end{equation}

We are interested in setting up differential equations for a basis
of the vector space spanned by the diagram in fig.~\ref{fig:massless_planar_DB_z_variables}
and its subdiagrams. An integral basis $\boldsymbol{\mathcal{I}}$ can be obtained
with \textsc{Azurite} \cite{Georgoudis:2016wff} which produces,
setting $I_\alpha \equiv I(1; \alpha; D)$ for brevity,
\begin{alignat}{3}
\boldsymbol{\mathcal{I}} = \Big( s^{-1 + 2\epsilon} I_{(0,1,0,0,1,0,1,0,0)},& \hspace{3mm}
&& s^{-1 + 2\epsilon} I_{(1,0,0,1,0,0,1,0,0)}, \nonumber \\
s^{2\epsilon} I_{(1,0,1,0,1,0,1,0,0)},& \hspace{3mm}
&& s^{2\epsilon} I_{(1,0,1,1,0,1,0,0,0)}, \nonumber \\
s^{1 + 2\epsilon} I_{(1,1,0,1,1,0,1,0,0)},& \hspace{3mm}
&& s^{1 + 2\epsilon} I_{(1,1,1,0,1,0,1,0,0)}, \nonumber \\
s^{3 + 2\epsilon} I_{(1,1,1,1,1,1,1,0,0)},& \hspace{3mm}
&& s^{2 + 2\epsilon} I_{(1,1,1,1,1,1,1,-1,0)} \Big) \,.
\label{eq:planar_DB_integral_basis}
\end{alignat}
Here we rescaled the basis integrals $I_\alpha$ by appropriate
factors of $s^{|\alpha| - 4 + 2\epsilon}$ to render them dimensionless.
The basis integrals can thus only depend on the external kinematics
through dimensionless ratios, of which only $\chi \equiv t/s$
is available.

We are therefore interested in differential equations for
the basis integrals in eq.~(\ref{eq:planar_DB_integral_basis})
taken with respect to $\chi$. To this end, we set $t = \chi s$
in eq.~(\ref{eq:extended_Gram_det_planar_DB}).

Now, in order to avoid introducing integrals with squared
propagators in intermediate stages, we ask whether
the enhanced ideal membership in eq.~(\ref{eq:Baikov_poly_ideal_membership_2}) holds.
We find that in the case at hand,
a slightly stronger ideal membership holds,
\begin{equation}
\frac{\partial F}{\partial \chi} =
\sum_{i=1}^9 b_i z_i \frac{\partial F}{\partial z_i} + bF \,.
\label{eq:decomposed_derivative_of_Baikov_pol}
\end{equation}
The cofactors are, setting $\mathbf{b} = (b_1, \ldots, b_9)$,
\begin{align}
\mathbf{b} &= \left({\textstyle\frac{z_3-z_8}{\chi (\chi +1) s}},
   {\textstyle\frac{z_3 - z_8 - \chi s -s}{\chi (\chi + 1) s}},
   {\textstyle\frac{z_3 - z_8 - s}{\chi (\chi+1) s}},
   {\textstyle\frac{z_4 - z_5 - s}{\chi (\chi +1) s}},
   {\textstyle\frac{z_4 - z_5 - s}{\chi (\chi +1) s}}, \right. \nonumber \\
&\hspace{7mm} \left.
   {\textstyle\frac{z_4-z_5}{\chi (\chi+1) s}},
   {\textstyle\frac{z_3 + z_4 - z_5 - z_8 - s}{\chi (\chi +1) s}},
   {\textstyle\frac{z_3 - z_8 - s}{\chi (\chi +1) s}},
   {\textstyle\frac{z_4 - z_5 - \chi s - s}{\chi (\chi +1) s}} \right) \nonumber \\[2mm]
b &= -{\textstyle\frac{2z_3 + 2z_4 - 2z_5 - 2z_8 - 2\chi s - 3s}{\chi (\chi +1) s}} \,.
\label{eq:massless_planar_four-point_cofactors}
\end{align}
These cofactors were found by writing Ans{\"a}tze for $(b_i,b)$
which are linear in $(z_1, \ldots, z_9)$ and solving the resulting
linear system. Having obtained expressions for the cofactors, we insert them into
eq.~(\ref{eq:numerator_of_differentiated_integrand}) to obtain the explicit
right-hand side of eq.~(\ref{eq:differential_equations}). After applying
integration-by-parts reductions to the resulting right-hand side, we find
a system of differential equations of the desired form,
\begin{equation}
\frac{\partial}{\partial \chi} \boldsymbol{\mathcal{I}} (\chi, \epsilon)
= A(\chi, \epsilon) \boldsymbol{\mathcal{I}} (\chi, \epsilon) \,,
\label{eq:diff_eqs_single_ratio}
\end{equation}
where we have set $\epsilon = \frac{4-D}{2}$, as we are interested
in computing the integrals in a Laurent expansion around
four space-time dimensions.
The resulting coefficient matrix $A(\chi, \epsilon)$ is not particularly
illuminating, and rather than presenting its explicit form, we take
one further step \cite{Henn:2013pwa} and rotate
to a basis in which the coefficient matrix becomes proportional
to $\epsilon$.

To this end, we note that under a change of integral basis,
\begin{equation}
\boldsymbol{\mathcal{J}}(\chi, \epsilon) = U(\chi, \epsilon) \boldsymbol{\mathcal{I}}(\chi, \epsilon) \,,
\label{eq:change-of-basis_matrix}
\end{equation}
the system of differential equations (\ref{eq:diff_eqs_single_ratio}) becomes,
\begin{equation}
\frac{\partial}{\partial \chi} \boldsymbol{\mathcal{J}} (\chi, \epsilon)
= \widehat{A}(\chi, \epsilon) \boldsymbol{\mathcal{J}} (\chi, \epsilon) \,,
\end{equation}
where the transformed coefficient matrix is,
\begin{equation}
\widehat{A} = U A U^{-1} + \frac{\partial U}{\partial \chi} U^{-1} \,.
\end{equation}
We can find a change-of-basis matrix $U$ with the desired property
by using \texttt{Fuchsia} \cite{Gituliar:2017vzm}.
Providing as input the coefficient matrix $A(\chi, \epsilon)$ computed
in eq.~(\ref{eq:diff_eqs_single_ratio}), it finds the following
transformation matrix,
\begin{align}
U &= \mathrm{diag} \Big( \textstyle{\frac{(1 - 2 \epsilon) (1 - 3 \epsilon) (2 - 3 \epsilon)}{120 \epsilon^3 \chi}},
\textstyle{\frac{(1 - 2 \epsilon) (1 - 3 \epsilon) (-2 + 3 \epsilon)}{120 \epsilon^3}}, \nonumber \\
& \hspace{7mm} \textstyle{\frac{(1 - 2 \epsilon) (1 - 3 \epsilon)}{24 \epsilon^2}},
\textstyle{\frac{(1 - 2 \epsilon)^2}{18 \epsilon^2}},
-\textstyle{\frac{\chi+1}{2}},
\textstyle{\frac{-1 + 2 \epsilon}{6 \epsilon}},
-\textstyle{\frac{\chi}{2}},
\textstyle{\frac{1}{2}} \Big) \,.
\end{align}
% ALTERNATIVE: inverse of change-of-basis matrix, slightly more compact
%\begin{align}
%U^{-1} &= \mathrm{diag}\Big(\textstyle{\frac{120 \epsilon^3 \chi}{(1 - 2\epsilon) (1 - 3\epsilon) (2 - 3\epsilon)}},
%\textstyle{\frac{120 \epsilon^3}{(1 - 2\epsilon) (1 - 3 \epsilon) (-2 + 3\epsilon)}}, \nonumber \\
%&\hspace{8mm} \textstyle{\frac{24 \epsilon^2}{(1 - 2 \epsilon) (1 - 3\epsilon)}},
%\textstyle{\frac{18 \epsilon^2}{(1 - 2 \epsilon)^2}},
%-\textstyle{\frac{2}{\chi +1}},
%\textstyle{\frac{6\epsilon}{2\epsilon -1}}, -\textstyle{\frac{2}{\chi}}, 2 \Big)
%\end{align}
The transformed coefficient matrix is proportional to $\epsilon$,
as desired, and takes the following form,
\begin{equation}
\widehat{A} = \epsilon \left( \frac{a_0}{\chi} + \frac{a_{-1}}{\chi + 1} \right) \,,
\label{eq:canonical_coeff_matrix_planar_DB}
\end{equation}
where $a_0$ and $a_{-1}$ are matrices with integer entries,
\begin{equation}
a_0 = \begin{pmatrix}
 -2 \hspace{-0.5mm} & \hspace{-0.5mm} 0 \hspace{-0.5mm} & \hspace{-0.5mm} 0 \hspace{-0.5mm} & \hspace{-0.5mm} 0 \hspace{-0.5mm} & \hspace{-0.5mm} 0 \hspace{-0.5mm} & \hspace{-0.5mm} 0 \hspace{-0.5mm} & \hspace{-0.5mm} 0 \hspace{-0.5mm} & \hspace{-0.5mm} 0 \\
 0 \hspace{-0.5mm} & \hspace{-0.5mm} 0 \hspace{-0.5mm} & \hspace{-0.5mm} 0 \hspace{-0.5mm} & \hspace{-0.5mm} 0 \hspace{-0.5mm} & \hspace{-0.5mm} 0 \hspace{-0.5mm} & \hspace{-0.5mm} 0 \hspace{-0.5mm} & \hspace{-0.5mm} 0 \hspace{-0.5mm} & \hspace{-0.5mm} 0 \\
 0 \hspace{-0.5mm} & \hspace{-0.5mm} 0 \hspace{-0.5mm} & \hspace{-0.5mm} 0 \hspace{-0.5mm} & \hspace{-0.5mm} 0 \hspace{-0.5mm} & \hspace{-0.5mm} 0 \hspace{-0.5mm} & \hspace{-0.5mm} 0 \hspace{-0.5mm} & \hspace{-0.5mm} 0 \hspace{-0.5mm} & \hspace{-0.5mm} 0 \\
 0 \hspace{-0.5mm} & \hspace{-0.5mm} 0 \hspace{-0.5mm} & \hspace{-0.5mm} 0 \hspace{-0.5mm} & \hspace{-0.5mm} 0 \hspace{-0.5mm} & \hspace{-0.5mm} 0 \hspace{-0.5mm} & \hspace{-0.5mm} 0 \hspace{-0.5mm} & \hspace{-0.5mm} 0 \hspace{-0.5mm} & \hspace{-0.5mm} 0 \\
 -60 \hspace{-0.5mm} & \hspace{-0.5mm} -60 \hspace{-0.5mm} & \hspace{-0.5mm} 0 \hspace{-0.5mm} & \hspace{-0.5mm} 0 \hspace{-0.5mm} & \hspace{-0.5mm} -2 \hspace{-0.5mm} & \hspace{-0.5mm} 0 \hspace{-0.5mm} & \hspace{-0.5mm} 0 \hspace{-0.5mm} & \hspace{-0.5mm} 0 \\
 20 \hspace{-0.5mm} & \hspace{-0.5mm} 0 \hspace{-0.5mm} & \hspace{-0.5mm} -4 \hspace{-0.5mm} & \hspace{-0.5mm} 0 \hspace{-0.5mm} & \hspace{-0.5mm} 0 \hspace{-0.5mm} & \hspace{-0.5mm} -2 \hspace{-0.5mm} & \hspace{-0.5mm} 0 \hspace{-0.5mm} & \hspace{-0.5mm} 0 \\
 -360 \hspace{-0.5mm} & \hspace{-0.5mm} 360 \hspace{-0.5mm} & \hspace{-0.5mm} 72 \hspace{-0.5mm} & \hspace{-0.5mm} 0 \hspace{-0.5mm} & \hspace{-0.5mm} 12 \hspace{-0.5mm} & \hspace{-0.5mm} 36 \hspace{-0.5mm} & \hspace{-0.5mm} -2 \hspace{-0.5mm} & \hspace{-0.5mm} 0 \\
 540 \hspace{-0.5mm} & \hspace{-0.5mm} -360 \hspace{-0.5mm} & \hspace{-0.5mm} -90 \hspace{-0.5mm} & \hspace{-0.5mm} -9 \hspace{-0.5mm} & \hspace{-0.5mm} -18 \hspace{-0.5mm} & \hspace{-0.5mm} -36 \hspace{-0.5mm} & \hspace{-0.5mm} 1 \hspace{-0.5mm} & \hspace{-0.5mm} 1 \\
\end{pmatrix} \,,
\end{equation}
and
\begin{equation}
a_{-1} = \begin{pmatrix}
 0 \hspace{-0.5mm} & \hspace{-0.5mm} 0 \hspace{-0.5mm} & \hspace{-0.5mm} 0 \hspace{-0.5mm} & \hspace{-0.5mm} 0 \hspace{-0.5mm} & \hspace{-0.5mm} 0 \hspace{-0.5mm} & \hspace{-0.5mm} 0 \hspace{-0.5mm} & \hspace{-0.5mm} 0 \hspace{-0.5mm} & \hspace{-0.5mm} 0 \\
 0 \hspace{-0.5mm} & \hspace{-0.5mm} 0 \hspace{-0.5mm} & \hspace{-0.5mm} 0 \hspace{-0.5mm} & \hspace{-0.5mm} 0 \hspace{-0.5mm} & \hspace{-0.5mm} 0 \hspace{-0.5mm} & \hspace{-0.5mm} 0 \hspace{-0.5mm} & \hspace{-0.5mm} 0 \hspace{-0.5mm} & \hspace{-0.5mm} 0 \\
 0 \hspace{-0.5mm} & \hspace{-0.5mm} 0 \hspace{-0.5mm} & \hspace{-0.5mm} 0 \hspace{-0.5mm} & \hspace{-0.5mm} 0 \hspace{-0.5mm} & \hspace{-0.5mm} 0 \hspace{-0.5mm} & \hspace{-0.5mm} 0 \hspace{-0.5mm} & \hspace{-0.5mm} 0 \hspace{-0.5mm} & \hspace{-0.5mm} 0 \\
 0 \hspace{-0.5mm} & \hspace{-0.5mm} 0 \hspace{-0.5mm} & \hspace{-0.5mm} 0 \hspace{-0.5mm} & \hspace{-0.5mm} 0 \hspace{-0.5mm} & \hspace{-0.5mm} 0 \hspace{-0.5mm} & \hspace{-0.5mm} 0 \hspace{-0.5mm} & \hspace{-0.5mm} 0 \hspace{-0.5mm} & \hspace{-0.5mm} 0 \\
 0 \hspace{-0.5mm} & \hspace{-0.5mm} 0 \hspace{-0.5mm} & \hspace{-0.5mm} 0 \hspace{-0.5mm} & \hspace{-0.5mm} 0 \hspace{-0.5mm} & \hspace{-0.5mm} 2 \hspace{-0.5mm} & \hspace{-0.5mm} 0 \hspace{-0.5mm} & \hspace{-0.5mm} 0 \hspace{-0.5mm} & \hspace{-0.5mm} 0 \\
 -20 \hspace{-0.5mm} & \hspace{-0.5mm} 0 \hspace{-0.5mm} & \hspace{-0.5mm} 4 \hspace{-0.5mm} & \hspace{-0.5mm} 0 \hspace{-0.5mm} & \hspace{-0.5mm} 0 \hspace{-0.5mm} & \hspace{-0.5mm} 1 \hspace{-0.5mm} & \hspace{-0.5mm} 0 \hspace{-0.5mm} & \hspace{-0.5mm} 0 \\
 360 \hspace{-0.5mm} & \hspace{-0.5mm} -720 \hspace{-0.5mm} & \hspace{-0.5mm} -36 \hspace{-0.5mm} & \hspace{-0.5mm} 18 \hspace{-0.5mm} & \hspace{-0.5mm} -12 \hspace{-0.5mm} & \hspace{-0.5mm} -36 \hspace{-0.5mm} & \hspace{-0.5mm} 2 \hspace{-0.5mm} & \hspace{-0.5mm} 2 \\
 -540 \hspace{-0.5mm} & \hspace{-0.5mm} 360 \hspace{-0.5mm} & \hspace{-0.5mm} 90 \hspace{-0.5mm} & \hspace{-0.5mm} -9 \hspace{-0.5mm} & \hspace{-0.5mm} 18 \hspace{-0.5mm} & \hspace{-0.5mm} 36 \hspace{-0.5mm} & \hspace{-0.5mm} -1 \hspace{-0.5mm} & \hspace{-0.5mm} -1 \\
\end{pmatrix} \,.
\end{equation}
Thus, we have shown that it is possible to derive differential
equations of the type (\ref{eq:diff_eqs_single_ratio})
for the basis integrals in eq.~(\ref{eq:planar_DB_integral_basis})
and achieve a canonical form (\ref{eq:canonical_coeff_matrix_planar_DB}) of the system
without introducing integrals with doubled propagators in intermediate stages.

\subsection{Massless non-planar double-box diagram}

As a slightly more involved example, let us consider
the fully massless non-planar double-box diagram
shown in fig.~\ref{fig:massless_nonplanar_DB_z_variables}.

\begin{figure}[!h]
\begin{center}
\includegraphics[angle=0, width=0.28\textwidth]{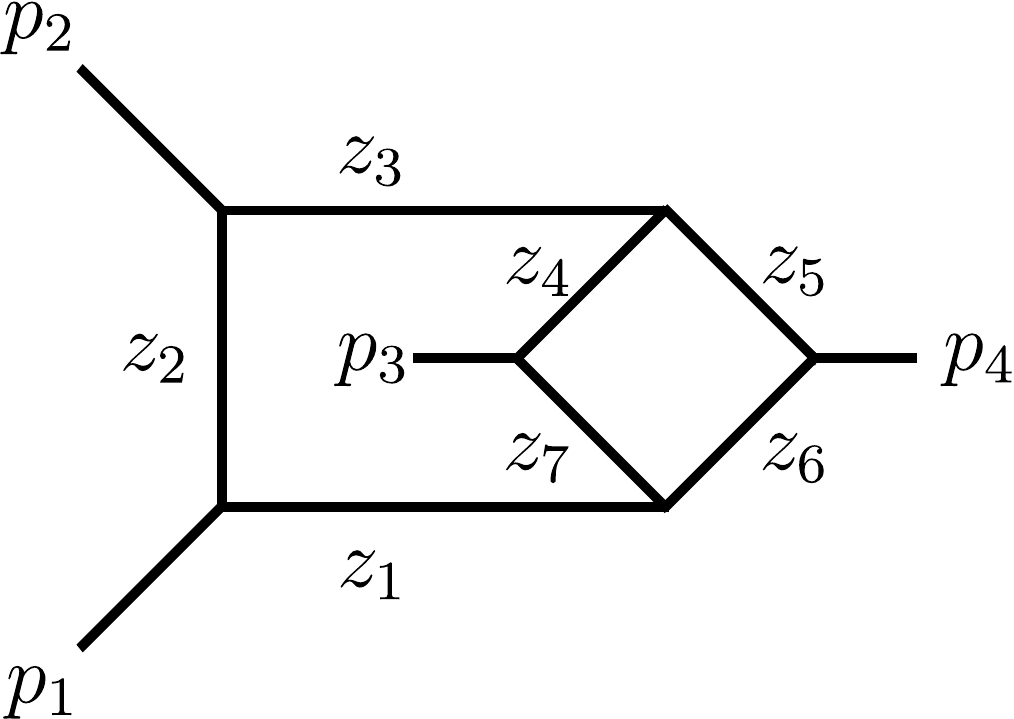}
{\vskip -0mm}
\caption{The fully massless non-planar double-box diagram.
All external momenta are taken to be outgoing.}
\label{fig:massless_nonplanar_DB_z_variables}
\end{center}
\end{figure}

\noindent We define the $z$-variables (\ref{eq:definition_of_z}) as follows,
\begin{equation}
\begin{alignedat}{3}
z_1&=\ell_1^2\,,           \hspace{3mm}  && z_2 = (\ell_1 - p_1)^2\,,  \hspace{3mm} && z_3 = (\ell_1 - P_{12})^2\,, \\
z_4&=(\ell_1 + \ell_2 + p_3)^2\,,  \hspace{3mm}  && z_5 = (\ell_2 - p_4)^2\,,  \hspace{3mm} && z_6 = \ell_2^2\,, \\
z_7&=(\ell_1 + \ell_2)^2\,,  \hspace{3mm}  && z_8 = (\ell_1 + p_4)^2\,,  \hspace{3mm} && z_9 = (\ell_2 + p_1)^2\,.
\end{alignedat}
\end{equation}
In terms of these variables and the Mandelstam invariants $s$ and $t$,
the Gram determinant in eq.~(\ref{eq:definition_of_Baikov_polynomial}) takes the following form,
\begin{equation}
F=\begin{vmatrix}
 0 \hspace{-0.5mm} & \hspace{-0.5mm} \frac{s}{2} \hspace{-0.5mm} & \hspace{-0.5mm} -\frac{s+t}{2} \hspace{-0.5mm} & \hspace{-0.5mm} \frac{z_1-z_2}{2} \hspace{-0.5mm} & \hspace{-0.5mm} \frac{z_9-z_6}{2} \\[2mm]
 \frac{s}{2} \hspace{-0.5mm} & \hspace{-0.5mm} 0 \hspace{-0.5mm} & \hspace{-0.5mm} \frac{t}{2} \hspace{-0.5mm} & \hspace{-0.5mm} \frac{z_2 - z_3 + s}{2} \hspace{-0.5mm} & \hspace{-0.5mm} \thead{\frac{1}{2}(z_3{-}z_4 \\ {+}z_5{+}z_7 \\ {-}z_8{-}z_9{-}s)} \\[5mm]
 -\frac{s+t}{2} \hspace{-0.5mm} & \hspace{-0.5mm} \frac{t}{2} \hspace{-0.5mm} & \hspace{-0.5mm} 0 \hspace{-0.5mm} & \hspace{-0.5mm} \frac{z_3 - z_8 - s}{2} \hspace{-0.5mm} & \hspace{-0.5mm} \thead{\frac{1}{2}(z_4{-}z_3 \\ {-}z_7{+}z_8{+}s)} \\[5mm]
 \frac{z_1 - z_2}{2} \hspace{-0.5mm} & \hspace{-0.5mm} \frac{z_2 - z_3 + s}{2} \hspace{-0.5mm} & \hspace{-0.5mm} \frac{z_3 - z_8 - s}{2} \hspace{-0.5mm} & \hspace{-0.5mm} z_1 \hspace{-0.5mm} & \hspace{-0.5mm} \frac{z_7 -z_1 - z_6}{2} \\[3mm]
 \frac{z_9 - z_6}{2} \hspace{-0.5mm} & \hspace{-0.5mm} \thead{\frac{1}{2}\big(z_3{-}z_4 \\ + z_5{+}z_7 \\ {-}z_8{-}z_9{-}s\big)}
 \hspace{-0.5mm} & \hspace{-0.5mm} \thead{\frac{1}{2} (z_4{-}z_3 \\ {-}z_7{+}z_8{+}s)} \hspace{-0.5mm} & \hspace{-0.5mm} \frac{z_7 - z_1 -z_6}{2} \hspace{-0.5mm} & \hspace{-0.5mm} z_6
\end{vmatrix} \,.
\end{equation}
Our aim is again to write down differential equations
for a basis of the vector space spanned by the diagram in
fig.~\ref{fig:massless_nonplanar_DB_z_variables} and its subdiagrams.
\textsc{Azurite} produces the following integral basis,
\begin{alignat}{3}
\boldsymbol{\mathcal{I}} = \Big( s^{-1 + 2\epsilon} I_{(1,0,0,1,1,0,0,0,0)}, & \hspace{3mm}
&& s^{-1 + 2\epsilon} I_{(0,1,0,1,0,1,0,0,0)}, \nonumber \\
s^{-1 + 2\epsilon} I_{(0,1,0,0,1,0,1,0,0)}, & \hspace{3mm}
&& s^{2\epsilon} I_{(1,0,1,1,0,1,0,0,0)}, \nonumber \\
s^{1 + 2\epsilon} I_{(1,1,1,1,0,1,0,0,0)}, & \hspace{3mm}
&& s^{1 + 2\epsilon} I_{(1,1,1,0,1,0,1,0,0)}, \nonumber \\
s^{1 + 2\epsilon} I_{(1,1,0,1,1,1,0,0,0)}, & \hspace{3mm}
&& s^{1 + 2\epsilon} I_{(1,1,0,1,1,0,1,0,0)}, \nonumber \\
s^{1 + 2\epsilon} I_{(0,1,0,1,1,1,1,0,0)}, & \hspace{3mm}
&& s^{2 + 2\epsilon} I_{(1,0,1,1,1,1,1,0,0)}, \nonumber \\
s^{2 + 2\epsilon} I_{(1,1,1,1,1,1,1,-1,0)}, & \hspace{3mm}
&& s^{3 + 2\epsilon} I_{(1,1,1,1,1,1,1,0,0)} \Big) \,,
\label{eq:non-planar_DB_integral_basis}
\end{alignat}
where we have again rescaled the basis integrals by
appropriate powers of $s$ to ensure that they only
depend on kinematics through $\chi$.

Our aim is now to write down differential equations for
the basis integrals in eq.~(\ref{eq:non-planar_DB_integral_basis})
without encountering squared propagators in intermediate stages.
We therefore ask whether the enhanced ideal membership
in eq.~(\ref{eq:Baikov_poly_ideal_membership_2}) holds.
Once again, the ``bonus'' ideal membership (\ref{eq:decomposed_derivative_of_Baikov_pol})
turns out to hold, with the cofactors,
\begin{align}
\mathbf{b} &=\left({\textstyle\frac{z_3 - z_8}{\chi (\chi +1) s}},
   {\textstyle\frac{z_3 - z_8 - \chi s - s}{\chi (\chi +1) s}},
   {\textstyle\frac{z_3 - z_8 - s}{\chi (\chi+1) s}},
   {\textstyle\frac{2z_3 - z_4 + z_7 - 2z_8 - s}{\chi (\chi +1) s}}, \right. \nonumber \\
&\hspace{7mm} \left.
   {\textstyle\frac{z_3 - z_4 + z_7 - z_8 - s}{\chi (\chi+1) s}},
   {\textstyle\frac{z_3 - z_4 + z_7 - z_8}{\chi (\chi +1) s}},
   {\textstyle\frac{2z_3 - z_4 + z_7 - 2z_8 - s}{\chi (\chi +1) s}}, \right. \nonumber \\
&\hspace{7mm} \left.
   {\textstyle\frac{z_3 - z_8 - s}{\chi (\chi +1) s}},
   {\textstyle\frac{z_3 - z_4 + z_7 - z_8 - \chi s - s}{\chi (\chi +1) s}} \right) \nonumber \\[2mm]
b &=  -{\textstyle \frac{4z_3 - 2z_4 + 2z_7 - 4z_8 - 2\chi s - 3s}{\chi (\chi +1) s}} \,.
\end{align}
We insert these expressions into eq.~(\ref{eq:numerator_of_differentiated_integrand})
to find the right-hand side of eq.~(\ref{eq:differential_equations}) and then
apply integration-by-parts reductions to the latter to find explicit
differential equations of the form (\ref{eq:diff_eqs_single_ratio}).

In order to present the resulting differential equations in a useful form, we use
\texttt{Fuchsia} to find a change-of-basis matrix $U$ (\ref{eq:change-of-basis_matrix})
to a canonical representation. In the case at hand,
we prefer to present the matrix inverse $U^{-1}$ which has a slightly
more compact expression. The latter takes the following form,
\begin{equation}
(U^{-1})_{ij} = \delta_{ij} v_{1,i} + \delta_{i,12} v_{2,j} \,,
\end{equation}
where the vectors $v_1$ and $v_2$ are given by,
\begin{align}
v_1 \hspace{0mm}&=\hspace{0mm} \Big(
\textstyle{\frac{30 \epsilon^3 (1 + 4\epsilon)}{(1 + \epsilon) (1 - 2\epsilon) (1 - 3\epsilon) (3\epsilon -2)}},
\textstyle{\frac{30 \epsilon^3 (1 + 4 \epsilon) (\chi +1)}{(1 + \epsilon) (1 - 2\epsilon) (1 - 3\epsilon) (2 - 3\epsilon)}}, \nonumber \\
&\hspace{-2mm}\textstyle{\frac{30 \epsilon^3 (1 + 4\epsilon) \chi}{(1 + \epsilon) (1 - 2\epsilon) (1 - 3\epsilon) (2 - 3\epsilon)}},
\textstyle{\frac{6\epsilon^2 (1 + 4\epsilon)}{(1 + \epsilon) (1 - 2\epsilon) (1 - 3\epsilon)}},
\textstyle{\frac{3\epsilon (1 + 4\epsilon)}{2 (1 + \epsilon) (-1 + 2 \epsilon)}}, \nonumber \\
&\hspace{-2mm}\textstyle{\frac{3\epsilon (1 + 4\epsilon)}{2 (1 + \epsilon) (-1 + 2 \epsilon)}},
-\textstyle{\frac{1 + 4\epsilon}{2 \chi (1 + \epsilon)}},
-\textstyle{\frac{1 + 4\epsilon}{2 (\chi +1) (1 + \epsilon)}},
\textstyle{\frac{1 + 4\epsilon}{2 (1 + \epsilon)}},
\textstyle{\frac{1 + 4\epsilon}{2 (1 + \epsilon)}}, \nonumber \\
&\hspace{-2mm}-\textstyle{\frac{1 + 4\epsilon}{2 (\chi +1) (1 + \epsilon)}},
-\textstyle{\frac{1 + 4 \epsilon}{2 \chi (1 + \epsilon)}} \Big) \,, \\[2mm]
v_2 \hspace{0mm}&=\hspace{0mm} \Big(
\textstyle{\frac{180 \epsilon}{\chi (\chi +1) (1 + \epsilon)}},
\textstyle{\frac{60}{\chi}},
\textstyle{\frac{60 (\chi + 1 + \chi \epsilon + 4\epsilon)}{\chi (\chi +1) (1 + \epsilon)}},
0, 0, 0, \nonumber \\
&-\textstyle{\frac{2 (\chi + 1 + \chi \epsilon + 4\epsilon)}{\chi (\chi +1) (1 + \epsilon)}},
-\textstyle{\frac{2}{\chi}},
-\textstyle{\frac{6\epsilon}{\chi (\chi +1) (1 + \epsilon)}},
0,
\textstyle{\frac{1 + 4\epsilon}{2 (\chi +1) (1 + \epsilon)}},
0
\Big) \,.
\end{align}
The transformed coefficient matrix takes the following form,
\begin{equation}
\widehat{A} = \epsilon \left( \frac{a_0}{\chi} + \frac{a_{-1}}{\chi + 1} \right) \,,
\label{eq:canonical_coeff_matrix_non-planar_DB}
\end{equation}
where $a_0$ and $a_{-1}$ are matrices with integer entries,
\begin{equation}
a_0 =
{\scriptsize
\left(
\begin{array}{cccccccccccc}
 0 \hspace{-0.5mm} & \hspace{-0.5mm} 0 \hspace{-0.5mm} & \hspace{-0.5mm} 0 \hspace{-0.5mm} & \hspace{-0.5mm} 0 \hspace{-0.5mm} & \hspace{-0.5mm} 0 \hspace{-0.5mm} & \hspace{-0.5mm} 0 \hspace{-0.5mm} & \hspace{-0.5mm} 0 \hspace{-0.5mm} & \hspace{-0.5mm} 0 \hspace{-0.5mm} & \hspace{-0.5mm} 0 \hspace{-0.5mm} & \hspace{-0.5mm} 0 \hspace{-0.5mm} & \hspace{-0.5mm} 0 \hspace{-0.5mm} & \hspace{-0.5mm} 0 \\
 0 \hspace{-0.5mm} & \hspace{-0.5mm} 0 \hspace{-0.5mm} & \hspace{-0.5mm} 0 \hspace{-0.5mm} & \hspace{-0.5mm} 0 \hspace{-0.5mm} & \hspace{-0.5mm} 0 \hspace{-0.5mm} & \hspace{-0.5mm} 0 \hspace{-0.5mm} & \hspace{-0.5mm} 0 \hspace{-0.5mm} & \hspace{-0.5mm} 0 \hspace{-0.5mm} & \hspace{-0.5mm} 0 \hspace{-0.5mm} & \hspace{-0.5mm} 0 \hspace{-0.5mm} & \hspace{-0.5mm} 0 \hspace{-0.5mm} & \hspace{-0.5mm} 0 \\
 0 \hspace{-0.5mm} & \hspace{-0.5mm} 0 \hspace{-0.5mm} & \hspace{-0.5mm} -2 \hspace{-0.5mm} & \hspace{-0.5mm} 0 \hspace{-0.5mm} & \hspace{-0.5mm} 0 \hspace{-0.5mm} & \hspace{-0.5mm} 0 \hspace{-0.5mm} & \hspace{-0.5mm} 0 \hspace{-0.5mm} & \hspace{-0.5mm} 0 \hspace{-0.5mm} & \hspace{-0.5mm} 0 \hspace{-0.5mm} & \hspace{-0.5mm} 0 \hspace{-0.5mm} & \hspace{-0.5mm} 0 \hspace{-0.5mm} & \hspace{-0.5mm} 0 \\
 0 \hspace{-0.5mm} & \hspace{-0.5mm} 0 \hspace{-0.5mm} & \hspace{-0.5mm} 0 \hspace{-0.5mm} & \hspace{-0.5mm} 0 \hspace{-0.5mm} & \hspace{-0.5mm} 0 \hspace{-0.5mm} & \hspace{-0.5mm} 0 \hspace{-0.5mm} & \hspace{-0.5mm} 0 \hspace{-0.5mm} & \hspace{-0.5mm} 0 \hspace{-0.5mm} & \hspace{-0.5mm} 0 \hspace{-0.5mm} & \hspace{-0.5mm} 0 \hspace{-0.5mm} & \hspace{-0.5mm} 0 \hspace{-0.5mm} & \hspace{-0.5mm} 0 \\
 0 \hspace{-0.5mm} & \hspace{-0.5mm} 20 \hspace{-0.5mm} & \hspace{-0.5mm} 0 \hspace{-0.5mm} & \hspace{-0.5mm} 4 \hspace{-0.5mm} & \hspace{-0.5mm} 1 \hspace{-0.5mm} & \hspace{-0.5mm} 0 \hspace{-0.5mm} & \hspace{-0.5mm} 0 \hspace{-0.5mm} & \hspace{-0.5mm} 0 \hspace{-0.5mm} & \hspace{-0.5mm} 0 \hspace{-0.5mm} & \hspace{-0.5mm} 0 \hspace{-0.5mm} & \hspace{-0.5mm} 0 \hspace{-0.5mm} & \hspace{-0.5mm} 0 \\
 0 \hspace{-0.5mm} & \hspace{-0.5mm} 0 \hspace{-0.5mm} & \hspace{-0.5mm} 20 \hspace{-0.5mm} & \hspace{-0.5mm} -4 \hspace{-0.5mm} & \hspace{-0.5mm} 0 \hspace{-0.5mm} & \hspace{-0.5mm} -2 \hspace{-0.5mm} & \hspace{-0.5mm} 0 \hspace{-0.5mm} & \hspace{-0.5mm} 0 \hspace{-0.5mm} & \hspace{-0.5mm} 0 \hspace{-0.5mm} & \hspace{-0.5mm} 0 \hspace{-0.5mm} & \hspace{-0.5mm} 0 \hspace{-0.5mm} & \hspace{-0.5mm} 0 \\
 0 \hspace{-0.5mm} & \hspace{-0.5mm} 0 \hspace{-0.5mm} & \hspace{-0.5mm} 0 \hspace{-0.5mm} & \hspace{-0.5mm} 0 \hspace{-0.5mm} & \hspace{-0.5mm} 0 \hspace{-0.5mm} & \hspace{-0.5mm} 0 \hspace{-0.5mm} & \hspace{-0.5mm} 2 \hspace{-0.5mm} & \hspace{-0.5mm} 0 \hspace{-0.5mm} & \hspace{-0.5mm} 0 \hspace{-0.5mm} & \hspace{-0.5mm} 0 \hspace{-0.5mm} & \hspace{-0.5mm} 0 \hspace{-0.5mm} & \hspace{-0.5mm} 0 \\
 -60 \hspace{-0.5mm} & \hspace{-0.5mm} 0 \hspace{-0.5mm} & \hspace{-0.5mm} -60 \hspace{-0.5mm} & \hspace{-0.5mm} 0 \hspace{-0.5mm} & \hspace{-0.5mm} 0 \hspace{-0.5mm} & \hspace{-0.5mm} 0 \hspace{-0.5mm} & \hspace{-0.5mm} 0 \hspace{-0.5mm} & \hspace{-0.5mm} -2 \hspace{-0.5mm} & \hspace{-0.5mm} 0 \hspace{-0.5mm} & \hspace{-0.5mm} 0 \hspace{-0.5mm} & \hspace{-0.5mm} 0 \hspace{-0.5mm} & \hspace{-0.5mm} 0 \\
 0 \hspace{-0.5mm} & \hspace{-0.5mm} -60 \hspace{-0.5mm} & \hspace{-0.5mm} -60 \hspace{-0.5mm} & \hspace{-0.5mm} 0 \hspace{-0.5mm} & \hspace{-0.5mm} 0 \hspace{-0.5mm} & \hspace{-0.5mm} 0 \hspace{-0.5mm} & \hspace{-0.5mm} 0 \hspace{-0.5mm} & \hspace{-0.5mm} 0 \hspace{-0.5mm} & \hspace{-0.5mm} -2 \hspace{-0.5mm} & \hspace{-0.5mm} 0 \hspace{-0.5mm} & \hspace{-0.5mm} 0 \hspace{-0.5mm} & \hspace{-0.5mm} 0 \\
 0 \hspace{-0.5mm} & \hspace{-0.5mm} 0 \hspace{-0.5mm} & \hspace{-0.5mm} 0 \hspace{-0.5mm} & \hspace{-0.5mm} 0 \hspace{-0.5mm} & \hspace{-0.5mm} 0 \hspace{-0.5mm} & \hspace{-0.5mm} 0 \hspace{-0.5mm} & \hspace{-0.5mm} 0 \hspace{-0.5mm} & \hspace{-0.5mm} 0 \hspace{-0.5mm} & \hspace{-0.5mm} 0 \hspace{-0.5mm} & \hspace{-0.5mm} 0 \hspace{-0.5mm} & \hspace{-0.5mm} 0 \hspace{-0.5mm} & \hspace{-0.5mm} 0 \\
 -360 \hspace{-0.5mm} & \hspace{-0.5mm} -60 \hspace{-0.5mm} & \hspace{-0.5mm} 660 \hspace{-0.5mm} & \hspace{-0.5mm} -36 \hspace{-0.5mm} & \hspace{-0.5mm} 0 \hspace{-0.5mm} & \hspace{-0.5mm} -18 \hspace{-0.5mm} & \hspace{-0.5mm} -4 \hspace{-0.5mm} & \hspace{-0.5mm} -16 \hspace{-0.5mm} & \hspace{-0.5mm} -6 \hspace{-0.5mm} & \hspace{-0.5mm} 1 \hspace{-0.5mm} & \hspace{-0.5mm} 1 \hspace{-0.5mm} & \hspace{-0.5mm} -1 \\
 -120 \hspace{-0.5mm} & \hspace{-0.5mm} 420 \hspace{-0.5mm} & \hspace{-0.5mm} 420 \hspace{-0.5mm} & \hspace{-0.5mm} -36 \hspace{-0.5mm} & \hspace{-0.5mm} 0 \hspace{-0.5mm} & \hspace{-0.5mm} -18 \hspace{-0.5mm} & \hspace{-0.5mm} -16 \hspace{-0.5mm} & \hspace{-0.5mm} -12 \hspace{-0.5mm} & \hspace{-0.5mm} 6 \hspace{-0.5mm} & \hspace{-0.5mm} 0 \hspace{-0.5mm} & \hspace{-0.5mm} 0 \hspace{-0.5mm} & \hspace{-0.5mm} -2 \\
\end{array}
\right) \,,
}
\end{equation}
and
\begin{equation}
a_{-1}=
{\scriptsize
\left(
\begin{array}{cccccccccccc}
 0 \hspace{-0.5mm} & \hspace{-0.5mm} 0 \hspace{-0.5mm} & \hspace{-0.5mm} 0 \hspace{-0.5mm} & \hspace{-0.5mm} 0 \hspace{-0.5mm} & \hspace{-0.5mm} 0 \hspace{-0.5mm} & \hspace{-0.5mm} 0 \hspace{-0.5mm} & \hspace{-0.5mm} 0 \hspace{-0.5mm} & \hspace{-0.5mm} 0 \hspace{-0.5mm} & \hspace{-0.5mm} 0 \hspace{-0.5mm} & \hspace{-0.5mm} 0 \hspace{-0.5mm} & \hspace{-0.5mm} 0 \hspace{-0.5mm} & \hspace{-0.5mm} 0 \\
 0 \hspace{-0.5mm} & \hspace{-0.5mm} -2 \hspace{-0.5mm} & \hspace{-0.5mm} 0 \hspace{-0.5mm} & \hspace{-0.5mm} 0 \hspace{-0.5mm} & \hspace{-0.5mm} 0 \hspace{-0.5mm} & \hspace{-0.5mm} 0 \hspace{-0.5mm} & \hspace{-0.5mm} 0 \hspace{-0.5mm} & \hspace{-0.5mm} 0 \hspace{-0.5mm} & \hspace{-0.5mm} 0 \hspace{-0.5mm} & \hspace{-0.5mm} 0 \hspace{-0.5mm} & \hspace{-0.5mm} 0 \hspace{-0.5mm} & \hspace{-0.5mm} 0 \\
 0 \hspace{-0.5mm} & \hspace{-0.5mm} 0 \hspace{-0.5mm} & \hspace{-0.5mm} 0 \hspace{-0.5mm} & \hspace{-0.5mm} 0 \hspace{-0.5mm} & \hspace{-0.5mm} 0 \hspace{-0.5mm} & \hspace{-0.5mm} 0 \hspace{-0.5mm} & \hspace{-0.5mm} 0 \hspace{-0.5mm} & \hspace{-0.5mm} 0 \hspace{-0.5mm} & \hspace{-0.5mm} 0 \hspace{-0.5mm} & \hspace{-0.5mm} 0 \hspace{-0.5mm} & \hspace{-0.5mm} 0 \hspace{-0.5mm} & \hspace{-0.5mm} 0 \\
 0 \hspace{-0.5mm} & \hspace{-0.5mm} 0 \hspace{-0.5mm} & \hspace{-0.5mm} 0 \hspace{-0.5mm} & \hspace{-0.5mm} 0 \hspace{-0.5mm} & \hspace{-0.5mm} 0 \hspace{-0.5mm} & \hspace{-0.5mm} 0 \hspace{-0.5mm} & \hspace{-0.5mm} 0 \hspace{-0.5mm} & \hspace{-0.5mm} 0 \hspace{-0.5mm} & \hspace{-0.5mm} 0 \hspace{-0.5mm} & \hspace{-0.5mm} 0 \hspace{-0.5mm} & \hspace{-0.5mm} 0 \hspace{-0.5mm} & \hspace{-0.5mm} 0 \\
 0 \hspace{-0.5mm} & \hspace{-0.5mm} -20 \hspace{-0.5mm} & \hspace{-0.5mm} 0 \hspace{-0.5mm} & \hspace{-0.5mm} -4 \hspace{-0.5mm} & \hspace{-0.5mm} -2 \hspace{-0.5mm} & \hspace{-0.5mm} 0 \hspace{-0.5mm} & \hspace{-0.5mm} 0 \hspace{-0.5mm} & \hspace{-0.5mm} 0 \hspace{-0.5mm} & \hspace{-0.5mm} 0 \hspace{-0.5mm} & \hspace{-0.5mm} 0 \hspace{-0.5mm} & \hspace{-0.5mm} 0 \hspace{-0.5mm} & \hspace{-0.5mm} 0 \\
 0 \hspace{-0.5mm} & \hspace{-0.5mm} 0 \hspace{-0.5mm} & \hspace{-0.5mm} -20 \hspace{-0.5mm} & \hspace{-0.5mm} 4 \hspace{-0.5mm} & \hspace{-0.5mm} 0 \hspace{-0.5mm} & \hspace{-0.5mm} 1 \hspace{-0.5mm} & \hspace{-0.5mm} 0 \hspace{-0.5mm} & \hspace{-0.5mm} 0 \hspace{-0.5mm} & \hspace{-0.5mm} 0 \hspace{-0.5mm} & \hspace{-0.5mm} 0 \hspace{-0.5mm} & \hspace{-0.5mm} 0 \hspace{-0.5mm} & \hspace{-0.5mm} 0 \\
 60 \hspace{-0.5mm} & \hspace{-0.5mm} -60 \hspace{-0.5mm} & \hspace{-0.5mm} 0 \hspace{-0.5mm} & \hspace{-0.5mm} 0 \hspace{-0.5mm} & \hspace{-0.5mm} 0 \hspace{-0.5mm} & \hspace{-0.5mm} 0 \hspace{-0.5mm} & \hspace{-0.5mm} -2 \hspace{-0.5mm} & \hspace{-0.5mm} 0 \hspace{-0.5mm} & \hspace{-0.5mm} 0 \hspace{-0.5mm} & \hspace{-0.5mm} 0 \hspace{-0.5mm} & \hspace{-0.5mm} 0 \hspace{-0.5mm} & \hspace{-0.5mm} 0 \\
 0 \hspace{-0.5mm} & \hspace{-0.5mm} 0 \hspace{-0.5mm} & \hspace{-0.5mm} 0 \hspace{-0.5mm} & \hspace{-0.5mm} 0 \hspace{-0.5mm} & \hspace{-0.5mm} 0 \hspace{-0.5mm} & \hspace{-0.5mm} 0 \hspace{-0.5mm} & \hspace{-0.5mm} 0 \hspace{-0.5mm} & \hspace{-0.5mm} 2 \hspace{-0.5mm} & \hspace{-0.5mm} 0 \hspace{-0.5mm} & \hspace{-0.5mm} 0 \hspace{-0.5mm} & \hspace{-0.5mm} 0 \hspace{-0.5mm} & \hspace{-0.5mm} 0 \\
 0 \hspace{-0.5mm} & \hspace{-0.5mm} 60 \hspace{-0.5mm} & \hspace{-0.5mm} 60 \hspace{-0.5mm} & \hspace{-0.5mm} 0 \hspace{-0.5mm} & \hspace{-0.5mm} 0 \hspace{-0.5mm} & \hspace{-0.5mm} 0 \hspace{-0.5mm} & \hspace{-0.5mm} 0 \hspace{-0.5mm} & \hspace{-0.5mm} 0 \hspace{-0.5mm} & \hspace{-0.5mm} -2 \hspace{-0.5mm} & \hspace{-0.5mm} 0 \hspace{-0.5mm} & \hspace{-0.5mm} 0 \hspace{-0.5mm} & \hspace{-0.5mm} 0 \\
 0 \hspace{-0.5mm} & \hspace{-0.5mm} 0 \hspace{-0.5mm} & \hspace{-0.5mm} 0 \hspace{-0.5mm} & \hspace{-0.5mm} 0 \hspace{-0.5mm} & \hspace{-0.5mm} 0 \hspace{-0.5mm} & \hspace{-0.5mm} 0 \hspace{-0.5mm} & \hspace{-0.5mm} 0 \hspace{-0.5mm} & \hspace{-0.5mm} 0 \hspace{-0.5mm} & \hspace{-0.5mm} 0 \hspace{-0.5mm} & \hspace{-0.5mm} 0 \hspace{-0.5mm} & \hspace{-0.5mm} 0 \hspace{-0.5mm} & \hspace{-0.5mm} 0 \\
 -360 \hspace{-0.5mm} & \hspace{-0.5mm} -180 \hspace{-0.5mm} & \hspace{-0.5mm} -180 \hspace{-0.5mm} & \hspace{-0.5mm} -36 \hspace{-0.5mm} & \hspace{-0.5mm} -18 \hspace{-0.5mm} & \hspace{-0.5mm} 0 \hspace{-0.5mm} & \hspace{-0.5mm} 12 \hspace{-0.5mm} & \hspace{-0.5mm} 0 \hspace{-0.5mm} & \hspace{-0.5mm} 6 \hspace{-0.5mm} & \hspace{-0.5mm} 0 \hspace{-0.5mm} & \hspace{-0.5mm} -2 \hspace{-0.5mm} & \hspace{-0.5mm} 0 \\
 -600 \hspace{-0.5mm} & \hspace{-0.5mm} -660 \hspace{-0.5mm} & \hspace{-0.5mm} 60 \hspace{-0.5mm} & \hspace{-0.5mm} -36 \hspace{-0.5mm} & \hspace{-0.5mm} -18 \hspace{-0.5mm} & \hspace{-0.5mm} 0 \hspace{-0.5mm} & \hspace{-0.5mm} 24 \hspace{-0.5mm} & \hspace{-0.5mm} -4 \hspace{-0.5mm} & \hspace{-0.5mm} -6 \hspace{-0.5mm} & \hspace{-0.5mm} 1 \hspace{-0.5mm} & \hspace{-0.5mm} -1 \hspace{-0.5mm} & \hspace{-0.5mm} 1 \\
\end{array}
\right) \,.
}
\end{equation}
We conclude that it is possible to derive differential
equations for the basis integrals in eq.~(\ref{eq:non-planar_DB_integral_basis})
of the type (\ref{eq:diff_eqs_single_ratio}) and achieve
a canonical form (\ref{eq:canonical_coeff_matrix_non-planar_DB}) of the system
without introducing integrals with doubled propagators in intermediate stages.

\subsection{Further verification of the enhanced ideal membership}

We have verified that the enhanced ideal membership
in eq.~(\ref{eq:Baikov_poly_ideal_membership_2}) holds
in many multi-loop examples. Rather than writing down the
explicit form of the resulting differential equations,
in this section we restrict ourselves to displaying
in fig.~\ref{fig:check_of_enhanced_membership}
a selection of diagrams for which eq.~(\ref{eq:Baikov_poly_ideal_membership_2})
was verified to hold.

As explained above, we note that eq.~(\ref{eq:Baikov_poly_ideal_membership_2})
implies that we can derive differential equations of the type (\ref{eq:diff_eqs_schematic})
without introducing integrals with doubled propagators in intermediate stages.

\begin{figure}[!h]
\begin{center}
\includegraphics[angle=0, width=0.4\textwidth]{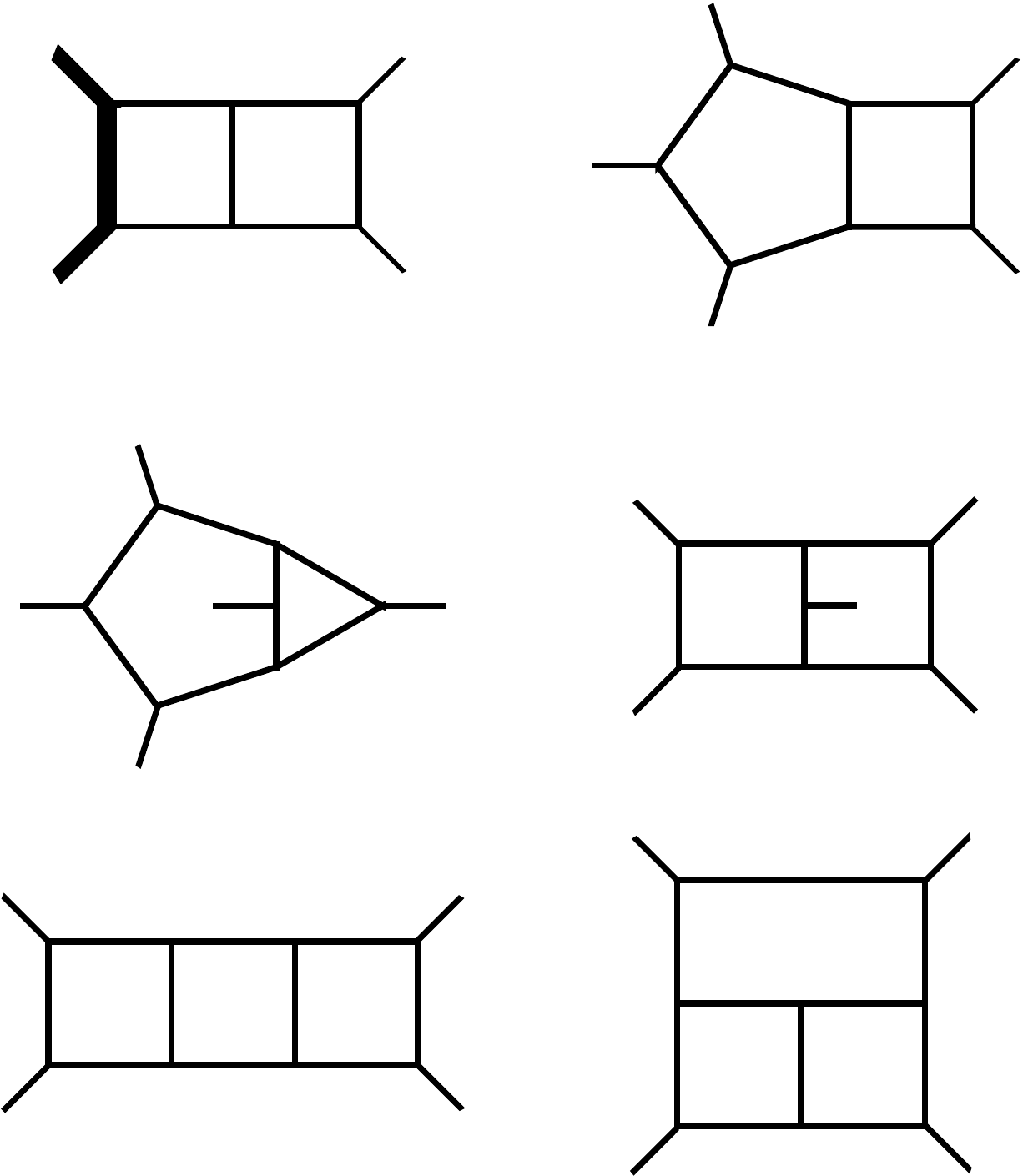}
{\vskip -0mm}
\caption{A selection of diagrams for which the enhanced
ideal membership in eq.~(\ref{eq:Baikov_poly_ideal_membership_2}) has been verified.
The bold lines represent massive momenta and propagators.}
\label{fig:check_of_enhanced_membership}
\end{center}
\end{figure}

\subsection{Counterexample of the enhanced ideal membership}

Although the enhanced ideal membership
in eq.~(\ref{eq:Baikov_poly_ideal_membership_2}) holds true
for a large class of Feynman diagrams, we have found that
it is not a general property of the Baikov polynomial $F$.

In particular, consider the diagram displayed in
fig.~\ref{fig:counterexample}. Upon computing a
Gr{\"o}bner basis $\mathcal{G}$ of the ideal and
performing polynomial division of $\frac{\partial F}{\partial \chi}$
with respect to $\mathcal{G}$, one finds a nonvanishing
remainder. Hence, the diagram in
fig.~\ref{fig:counterexample} provides a counterexample
to the enhanced ideal membership
in eq.~(\ref{eq:Baikov_poly_ideal_membership_2}).

\begin{figure}[!h]
\begin{center}
\includegraphics[angle=0, width=0.25\textwidth]{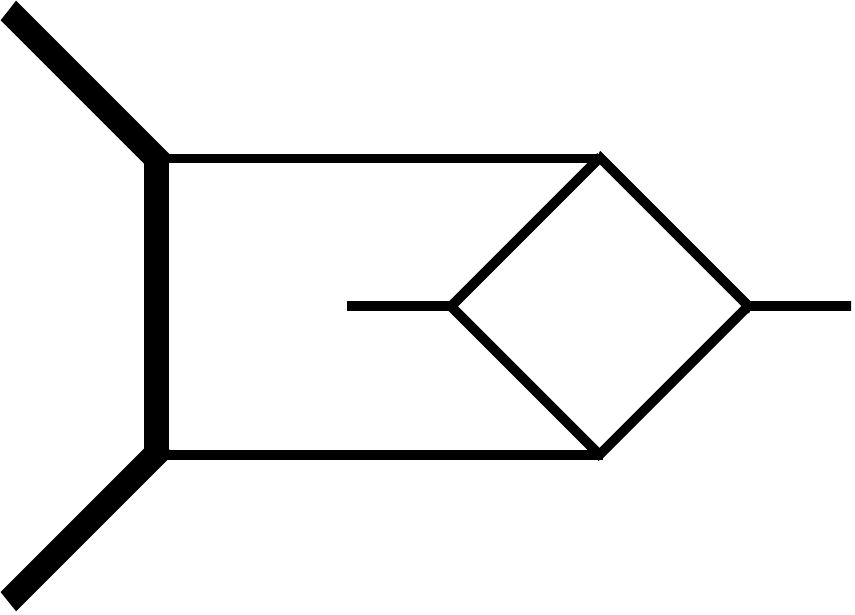}
{\vskip -0mm}
\caption{Non-planar double-box diagram. The bold lines represent massive
momenta and propagators. For this diagram, the enhanced ideal membership
eq.~(\ref{eq:Baikov_poly_ideal_membership_2}) does not hold.}
\label{fig:counterexample}
\end{center}
\end{figure}

\section{Conclusions}\label{sec:conclusions}

In this paper we have studied differential equations for loop integrals
in Baikov representation. We gave a proof that the Baikov polynomial satisfies
the ideal membership property (\ref{eq:Baikov_poly_ideal_membership_1}).
We have shown that, as a result, differential equations can be derived
without involving dimension-shift identities. This is advantageous
because the latter are computationally intensive to generate,
as they require integration-by-parts identities that involve
integrals with high powers of loop momentum monomials.
We remark that the proof is constructive
and gives an explicit construction of the cofactors.

We have moreover shown that an enhanced ideal membership property
(\ref{eq:Baikov_poly_ideal_membership_2}) holds for a large class of
multi-loop diagrams. As a result, differential equations
of the type (\ref{eq:diff_eqs_schematic}) can be derived
without introducing integrals with doubled propagators in intermediate stages.
This is beneficial because it limits the set of integrals
for which integration-by-parts reductions must be determined.

At the same time, we have identified a counterexample to
eq.~(\ref{eq:Baikov_poly_ideal_membership_2}), showing that it
is not a general property of the Baikov polynomial. An interesting
open problem is therefore to classify the diagrams for which
the enhanced ideal membership property (\ref{eq:Baikov_poly_ideal_membership_2})
holds.

We emphasize that eq.~(\ref{eq:Baikov_poly_ideal_membership_2}) holds
for highly non-trivial loop diagrams whose differential equations
are not attainable with standard methods. It therefore appears
to be a promising tool for the calculation of multi-loop integrals.

{\bf Acknowledgments}

We thank Roman N. Lee for collaboration at an early stage of this work and very useful discussions.
We also thank Hjalte~Frellesvig, Harald~Ita, David~A.~Kosower, Erik~Panzer,
Costas~Papadopoulos, Vladimir~A.~Smirnov and Mao Zeng for useful discussions.
The research leading to these results has received
funding from Swiss National Science Foundation (Ambizione grant PZ00P2
161341) and funding from the European Research Council (ERC) under the European Union’s Horizon 2020 research and innovation programme (grant agreement No 725110).
The work of YZ is also partially supported by the
Swiss National Science Foundation through the NCCR SwissMap, Grant number 141869.
The work of JB is supported by grant no. 615203 from the European Research Council
under the FP7 and by the Swiss National Science Foundation
through the NCCR SwissMAP.
The work of KJL is supported by ERC-2014-CoG, Grant number 648630 IQFT.

\appendix
\numberwithin{equation}{section}
\renewcommand{\theequation}{\Alph{section}.\arabic{equation}}

\section{Differential equations from dimension-shift identities}\label{sec:diff_eqs_via_dimension_shifts}

In this Appendix we work out the form of the differential equations
(\ref{eq:diff_eqs_in_Baikov_rep_1}) when $\frac{\partial F}{\partial \chi}$
is left unchanged. As discussed below this equation,
the $\frac{1}{F}$ factor in the second line has the effect
of shifting the space-time dimension from $D$ to $D-2$.
Accordingly, we will show how to derive dimension-shift identities
relating the latter two. We remark that in this approach,
integrals with squared propagators are not generated
in intermediate stages. It is therefore entirely compatible
with the IBP reduction formalisms of
refs.~\cite{Gluza:2010ws,Ita:2015tya,Larsen:2015ped,Bern:2017gdk}.

It will be more convenient for the presentation to consider
a $(D+2)$-dimensional integral. Taking the Baikov representation
(\ref{eq:Baikov_representation}) of the latter and applying
a derivative with respect to any external invariant $\chi$ we find,
\begin{align}
& \hspace{-10mm} \frac{\partial}{\partial \chi} I_j(N_j; \alpha; D + 2) \nonumber \\
%&= \hspace{1mm} \frac{C_E^L(D+2)}{C_E^L(D)} \frac{D-L-E+1}{2 U}
%\hspace{0.6mm} I_j \hspace{-0.6mm} \left(N_j \frac{\partial F}{\partial \chi}; \alpha; D\right) \nonumber \\
%& \hspace{5mm} + \frac{E-D-1}{2U} \frac{\partial U}{\partial \chi} I_j (N_j; \alpha; D + 2) \\[3mm]
= &\hspace{1mm} \frac{2^{L-1} \Gamma(D-L-E+2)}{\Gamma(D-E+1) \hspace{0.3mm} U}
\hspace{0.6mm} I_j \hspace{-0.6mm}\left(N_j \frac{\partial F}{\partial \chi}; \alpha; D\right) \nonumber \\[1mm]
& + \frac{E-D-1}{2 U} \frac{\partial U}{\partial \chi} I_j(N_j; \alpha; D + 2) \,.
\label{eq:derivative_of_(D+2)-dim_integral}
\end{align}
We remark that, in computing the ratio $\frac{C_E^L(D+2)}{C_E^L(D)}$ above,
we used the fact that the entries of $A$ (defined in eq.~(\ref{eq:relation_of_z_to_x}))
are integers which are independent of $D$. Hence $\det A$ cancels out of
the ratio, cf.~eq.~(\ref{eq:Baikov_prefactor}).

Our aim is now to re-express the $(D+2)$-dimensional integrals in
eq.~(\ref{eq:derivative_of_(D+2)-dim_integral}) in terms of $D$-dimensional
integrals. To determine the requisite dimension-shift identities \cite{Bern:1992em,Bern:1993kr,Tarasov:1996br,Lee:2009dh,Lee:2010wea},
we start by observing that,
\begin{align}
I_j (N_j; \alpha; D + 2) &= \frac{C_E^L(D+2)}{C_E^L(D) \hspace{0.3mm} U} I_j (F N_j; \alpha; D) \\
&= \frac{2^L \Gamma(D-L-E+1)}{\Gamma(D-E+1) \hspace{0.3mm} U} I_j (F N_j; \alpha; D) \,.
\label{eq:D-dim_shift_identity_before_IBP}
\end{align}
Now, by applying integration-by-parts reductions to the right-hand side of eq.~(\ref{eq:D-dim_shift_identity_before_IBP}),
we can re-express $I_j (F N_j; \alpha; D)$ as a linear combination of the basis integrals.
In this way we find,
\begin{equation}
I_j (N_j; \alpha; D + 2) \hspace{0.6mm}=\hspace{0.6mm} \sum_{k=1}^M T_{jk} (\chi) I_k (N_k; \alpha; D) \,.
\label{eq:D-dim_shift_identities}
\end{equation}
Applying $\frac{\partial}{\partial \chi}$ to eq.~(\ref{eq:D-dim_shift_identities}) and comparing
the resulting expression for $\frac{\partial}{\partial \chi} I_j (N_j; \alpha; D + 2)$ with
eq.~(\ref{eq:derivative_of_(D+2)-dim_integral}) we find, after applying eq.~(\ref{eq:D-dim_shift_identities})
and multiplying by $(T^{-1})_{ij}$ from the left,
\begin{equation}
\frac{\partial}{\partial \chi} I_i (N_i; \alpha; D) = \sum_{k=1}^M A_{ik} I_k (N_k; \alpha; D) \,,
\end{equation}
where the coefficient matrix is given by,
\begin{align}
A_{ik} \hspace{0mm}&=\hspace{0mm} \sum_{j=1}^M
\big(T^{-1}\big)_{ij} \left[ \frac{2^{L-1} \Gamma(D{-}L{-}E{+}2)}
{\Gamma(D{-}E{+}1) \hspace{0.3mm} U} R_{jk} - \frac{\partial T_{jk} (\chi)}{\partial \chi} \right] \nonumber \\
& \hspace{6mm} + \frac{E-D-1}{2 U} \frac{\partial U}{\partial \chi} \delta_{ik} \,,
\end{align}
and where the matrix $R$ arises as the matrix of IBP reduction coefficients,
\begin{equation}
I_j \left( N_j \frac{\partial F}{\partial \chi}; \alpha; D \right)
= \sum_{k=1}^M R_{jk} I_k (N_k; \alpha; D) \,.
\label{eq:IBP_reduction_of_insertion_of_derivative_of_Baikov_pol}
\end{equation}
The above dimension-shift approach has the virtue that
it avoids generating integrals with squared propagators
in intermediate stages. It is therefore amenable to the
the IBP reduction formalisms of
refs.~\cite{Gluza:2010ws,Ita:2015tya,Larsen:2015ped,Bern:2017gdk}.
However, performing the integration-by-parts reductions
of the right-hand side of eq.~(\ref{eq:IBP_reduction_of_insertion_of_derivative_of_Baikov_pol})
is computationally intensive in practice. This is because the
Baikov polynomial $F$ has higher powers of monomials than
encountered in the Feynman rules of gauge theory, and as a result
Gaussian elimination must be applied to large linear systems.

\bibliographystyle{h-physrev}

\bibliography{Diff_eqs_in_Baikov_rep}

\end{document}